\title{R-enum: Enumeration of Characteristic Substrings in BWT-runs Bounded Space}
\titlerunning{Enumeration of Characteristic Substrings in BWT-runs Bounded Space}
\author{Takaaki Nishimoto}{RIKEN Center for Advanced Intelligence Project, Japan}{takaaki.nishimoto@riken.jp}{}{}
\author{Yasuo Tabei}{RIKEN Center for Advanced Intelligence Project, Japan}{yasuo.tabei@riken.jp}{}{}
\authorrunning{T. Nishimoto and Y. Tabei}
\keywords{Enumeration algorithm, Burrows-Wheeler transform, Maximal repeats, Minimal unique substrings, Minimal absent words}
\begin{document}

\maketitle

\begin{abstract}
Enumerating characteristic substrings (e.g., maximal repeats, minimal unique substrings, and minimal absent words) 
in a given string has been an important research topic because 
there are a wide variety of applications in various areas such as string processing and computational biology.
Although several enumeration algorithms for characteristic substrings have been proposed, 
they are not space-efficient in that their space-usage is proportional to the length of an input string. 
Recently, the run-length encoded Burrows-Wheeler transform (RLBWT) has attracted increased attention in string processing, 
and various algorithms for the RLBWT have been developed.
Developing enumeration algorithms for characteristic substrings with the RLBWT, 
however, remains a challenge.
In this paper, we present \emph{r-enum (RLBWT-based enumeration)}, the first enumeration algorithm for characteristic substrings based on RLBWT.
R-enum runs in $O(n \log \log (n/r))$ time and with $O(r \log n)$ bits of 
working space for string length $n$ and number $r$ of runs in RLBWT, where $r$ is expected to be significantly 
smaller than $n$ for highly repetitive strings (i.e., strings with many repetitions). 
Experiments using a benchmark dataset of highly repetitive strings show that the results of r-enum are more space-efficient than the previous results. 
In addition, we demonstrate the applicability of r-enum to a huge string by performing experiments on a 300-gigabyte string of 100 human genomes.
\end{abstract}

\newcommand{\Occ}{\mathit{Occ}}

\newcommand{\floor}[1]{\left \lfloor #1 \right \rfloor}
\newcommand{\ceil}[1]{\left \lceil #1 \right \rceil}

\newcommand{\argmax}{\mathop{\rm arg~max}\limits}
\newcommand{\argmin}{\mathop{\rm arg~min}\limits}
\newcommand{\polylog}{\mathop{\rm polylog}\limits}

\newcommand{\LCP}{\mathsf{LCP}}
\newcommand{\SA}{\mathsf{SA}}

\newcommand{\BWT}{\mathsf{BWT}}
\newcommand{\LF}{\mathsf{LF}}
\newcommand{\FL}{\mathsf{FL}}

\newcommand{\rank}{\mathsf{rank}}
\newcommand{\children}{\mathsf{children}}
\newcommand{\RD}{\mathsf{RD}}
\newcommand{\interval}{\mathsf{interval}}

\newcommand{\repr}{\mathsf{repr}}
\newcommand{\erepr}{\mathsf{eRepr}}

\newcommand{\Weiner}{\mathsf{WLink}}
\newcommand{\rightarr}{\mathsf{Right}}
\newcommand{\substr}{\mathsf{substr}}
\newcommand{\stinterval}{\mathsf{interval}}
\newcommand{\eRD}{\mathsf{eRD}}
\newcommand{\outputrepr}{\mathsf{output}}
\newcommand{\startset}{\mathcal{S}_{\mathsf{start}}}





\section{Introduction}\label{sec:intro}
Enumerating characteristic substrings (e.g., maximal repeats, minimal unique substrings and minimal absent words) in 
a given string has been an important research topic because 
there are a wide variety of applications in various areas such as string processing and computational biology. 
The usefulness of the enumeration of maximal repeats has been demonstrated in lossless data compression~\cite{DBLP:conf/dcc/FuruyaTNIBK19},
bioinformatics~\cite{DBLP:journals/bioinformatics/BecherDH09,DBLP:books/cu/Gusfield1997} and 
and string classification with machine learning models~\cite{DBLP:conf/sdm/OkanoharaT09, DBLP:conf/iceis/MasadaTSO11}.
The enumeration of minimal unique substrings and minimal absent words has shown practical benefits in bioinformatics~\cite{DBLP:journals/bmcbi/HauboldPMW05,DBLP:journals/algorithms/AbedinKT20,DBLP:journals/iandc/Charalampopoulos18,DBLP:conf/latin/CrochemoreFMP16} and 
data compression~\cite{892711,DBLP:conf/sccc/CrochemoreN02}.
There is therefore a strong need to develop scalable algorithms for enumerating characteristic substrings in a huge string.

The \emph{Burrows-Wheeler transform (BWT)}~\cite{burrows1994block} is for permutation-based lossless data compression of a string, 
and many enumeration algorithms for characteristic substrings leveraging BWT have been proposed. 
Okanohara and Tsujii~\cite{DBLP:conf/sdm/OkanoharaT09} proposed an enumeration algorithm for maximal repeats that uses BWT and an enhanced suffix array~\cite{DBLP:journals/jda/AbouelhodaKO04}. 
Since their algorithm takes linear time to the length $n$ of a string and $O(n\log n)$ working space, 
applying it to a huge string is computationally demanding.
Beller et al.~\cite{DBLP:conf/spire/BellerBO12} proposed an enumeration algorithm for maximal repeats 
that uses a \emph{range distance query} on the BWT of a string in $O(n\log{\sigma})$ time and $n \log \sigma + o(n \log \sigma) + O(n)$ bits 
of working space for alphabet size $\sigma$ of a string.
Since the working space of their algorithm is linearly proportional to the length $n$ of a string, a large amount of space is expected to be consumed for huge strings.
Along the same line of research, Belazzougui et al.~\cite{DBLP:conf/cpm/BelazzouguiCGPR15} proposed an algorithm for enumerating characteristic substrings 
in $O(n\log{\sigma})$ time and $n \log \sigma + o(n \log \sigma) + O(\sigma^{2} \log^{2} n)$ bits of working space, 
which resulted in the space usage being linearly proportional to the string length.
Thus, developing a more space-efficient enumeration algorithm for the characteristic substrings of a string remains a challenging issue.

\emph{Run-length BWT~(RLBWT)} is a recent, popular lossless data compression, and it is defined as a run-length compressed BWT for strings.
Thus, the compression performance of RLBWT has been shown to be high, especially for \emph{highly repetitive strings} (i.e., strings with many repetitions) 
such as genomes, version-controlled documents, and source code repositories.
Kempa and Kociumaka~\cite{DBLP:journals/corr/abs-1910-10631} showed an upper bound on the size of the RLBWT by using a measure of repetitiveness.
Although several compressed data structures and string processing algorithms that use RLBWT have also been proposed~(e.g., \cite{DBLP:conf/cpm/BelazzouguiCGPR15,10.1145/3375890,DBLP:journals/jda/OhnoSTIS18,DBLP:conf/cpm/BannaiGI18,DBLP:conf/soda/Kempa19}), 
no previous algorithms for enumerating characteristic substrings based on RLBWT have been proposed. 
Such enumeration algorithms are expected to be much more space-efficient than existing algorithms for highly repetitive strings. 

\emph{Contribution.}
We present the first enumeration algorithm for characteristic substrings based on RLBWT, which we call \emph{r-enum (RLBWT-based enumeration)}. 
Following the idea of the previous works~\cite{DBLP:conf/spire/BellerBO12,DBLP:conf/spire/BelazzouguiC15}, 
r-enum performs an enumeration by simulating traversals of a \emph{Weiner-link tree}~(e.g., \cite{DBLP:journals/talg/BelazzouguiCKM20}), 
which is a trie, each node of which represents a right-maximal repeat in a string $T$ of length $n$.
Each characteristic substring in $T$ corresponds to a node in the Weiner-link tree of $T$. 
This is made possible in $O(n \log \log_{w} (n/r) + occ)$ time and $O(r \log n)$ bits of working space 
for $r$ number of runs in RLBWT, machine word size $w = \Theta(\log n)$, and $occ$ number of characteristic substrings.
For a highly repetitive string such that $r = o(n \log \sigma / \log n)$ holds, 
r-enum is more space-efficient than the best previous algorithms taking $O(nd)$ time and 
$|RD| + O(n)$ bits of space, where $|RD|$ is the size of a data structure supporting range distinct queries 
and computing the LF function in $O(d)$ time; 
a pair $(|RD|, d)$ can be chosen as $(|RD|, d) = (n \log \sigma + o(n \log \sigma), O(\log \sigma))$~\cite{DBLP:journals/is/ClaudeNP15} or $(|RD|, d) = (O(n \log \sigma), O(1))$~\cite[Lemmas 3.5 and 3.17]{DBLP:journals/talg/BelazzouguiCKM20}. 
Table~\ref{table:result} summarizes the running time and working space of state-of-the-art algorithms 
including those by Okanohara and Tsujii (OT method)~\cite{DBLP:conf/sdm/OkanoharaT09}, 
Beller et al. (BBO method)~\cite{DBLP:conf/spire/BellerBO12}, and 
Belazzougui and Cunial (BC method)~\cite{DBLP:conf/spire/BelazzouguiC15} in comparison with our r-enum. 

Experiments using a benchmark dataset of highly repetitive strings show that r-enum is more space-efficient than the previous algorithms. 
In addition, we demonstrate the applicability of r-enum to a huge string by performing experiments on a 300-gigabyte string of 100 human genomes, 
which has not been shown in the previous work so far.

The outline of this paper is as follows.
Section~\ref{sec:preliminary} introduces several basic notions, including the Weiner-link tree. 
In Section~\ref{sec:traverse}, we present a traversal algorithm for the Weiner-link tree of $T$ in $O(r \log n)$ bits.
Section~\ref{sec:applications} presents r-enum for finding
 the corresponding nodes to maximal repeats, minimal unique substrings, and minimal absent words.
In Section~\ref{sec:option}, we slightly modify r-enum 
such that it outputs each characteristic substring and its occurrences in $T$ instead of the corresponding node to the characteristic substring. 
Section~\ref{sec:exp} shows the performance of our method on benchmark datasets of highly repetitive strings.

\renewcommand{\arraystretch}{0.7}
\begin{table}[t]
    \vspace{-0.5cm}
    \caption{
    Summary of running time and working space of enumeration algorithms for 
    (i) maximal repeats, (ii) minimal unique substrings, and (iii) minimal absent words for each method. 
    Last column represents main data structure used in each algorithm. 
    Input of each algorithm is string $T$ of length $n$ or BWT of $T$, 
    and each outputted characteristic substring is represented by pointer with $O(\log n)$ bits. 
    We exclude inputs and outputs from working space. 
    In addition, $\sigma$ is alphabet size of $T$, 
    $w = \Theta(\log n)$ is machine word size, $r$ is the number of runs in RLBWT of $T$, 
    and $occ = O(n \sigma)$~\cite{DBLP:journals/ipl/CrochemoreMR98} is the number of minimal absent words for $T$. 
    RD means data structure (i) supporting range distinct queries 
    in $O(d)$ time per output element and (ii) computing LF function in $O(d)$ time; 
    $|RD|$ is its size. 
    We can choose $(|RD|, d) = (n \log \sigma + o(n \log \sigma), O(\log \sigma))$~\cite{DBLP:journals/is/ClaudeNP15} or $(|RD|, d) = (O(n \log \sigma), O(1))$~\cite[Lemmas 3.5 and 3.17]{DBLP:journals/talg/BelazzouguiCKM20}.
    }
    \label{table:result} 
    \center{	    
    \begin{tabular}{r||c|c|c}
(i) Maximal repeats & Running time & Working space~(bits) & Data structures \\ \hline
OT method~\cite{DBLP:conf/sdm/OkanoharaT09} & $O(n)$ & $O(n \log n)$ & Enhanced suffix array \\ \hline
\cite[Theorem 7.8]{DBLP:journals/talg/BelazzouguiCKM20} & $O(n)$ & $O(n \log \sigma)$ & BWT and RD \\ \hline
BBO method~\cite{DBLP:conf/spire/BellerBO12} & $O(n d)$ & $|RD| + O(n)$ & BWT and RD \\ \hline
BC method~\cite{DBLP:conf/spire/BelazzouguiC15} & $O(n d)$ & $|RD| + O(\sigma^{2} \log^{2} n)$ & BWT and RD \\ \hline \hline

r-enum (this study) & $O(n \log \log_{w} (n/r))$ & $O(r \log n)$ & RLBWT and RD 

    \end{tabular} 
    \smallskip    
    \smallskip    
    
    \begin{tabular}{r||c|c|c}
(ii) Minimal unique substrings & Running time & Working space~(bits)  & Data structures \\ \hline
BC method~\cite{DBLP:conf/spire/BelazzouguiC15} & $O(nd)$ & $|RD| + O(\sigma^{2} \log^{2} n)$  & BWT and RD \\ \hline \hline
r-enum (this study) & $O(n \log \log_{w} (n/r))$ & $O(r \log n)$ & RLBWT and RD \\ 
    \end{tabular} 
    \smallskip    
    \smallskip    
    
    \begin{tabular}{r||c|c|c}
(iii) Minimal absent words & Running time & Working space~(bits) & Data structures \\ \hline
\cite{DBLP:journals/bmcbi/BartonHMP14} & $O(n + occ)$ & $O(n \log n)$ & Suffix array \\ \hline
\cite[Theorem 7.12]{DBLP:journals/talg/BelazzouguiCKM20} & $O(n + occ)$ & $O(n \log \sigma)$ & BWT and RD \\ \hline
BC method~\cite{DBLP:conf/spire/BelazzouguiC15} & $O(nd + occ)$ & $|RD| + O(\sigma^{2} \log^{2} n)$ & BWT and RD \\ \hline \hline
r-enum (this study) & $O(n \log \log_{w} (n/r) + occ)$ & $O(r \log n)$ & RLBWT and RD \\ 
    \end{tabular} 

    }
\end{table}

\section{Preliminaries} \label{sec:preliminary}
Let $\Sigma = \{ 1, 2, \ldots, \sigma \}$ be an ordered alphabet, 
$T$ be a string of length $n$ over $\Sigma$, and $|T|$ be the length of $T$. 
Let $T[i]$ be the $i$-th character of $T$~(i.e., $T = T[1], T[2], \ldots, T[n]$), and $T[i..j]$ be the substring of $T$ that begins at position $i$ and ends at position $j$. 
For two strings $T$ and $P$, $T \prec P$ means that $T$ is lexicographically smaller than $P$. 
$\Occ(T, P)$ denotes all the occurrence positions of $P$ in $T$, i.e., $\Occ(T, P) = \{i \mid i \in [1, n-|P|+1] \mbox{ s.t. } P = T[i..(i+|P|-1)] \}$. 
We assume that 
(i) the last character of $T$ is a special character $\$$ not occurring in substring $T[1..n-1]$, 
(ii) $|T| \geq 2$, 
and (iii) every character in $\Sigma$ occurs at least once in $T$. 
For two integers $b$ and $e$~($b \leq e$), \emph{interval} $[b, e]$ represents the set $\{b, b+1, \ldots, e \}$. 
Let $\substr(T)$ denote the set of all the distinct substrings of $T$~(i.e., $\substr(T) = \{ T[i..j] \mid 1 \leq i \leq j \leq n \}$). 

In this paper, characteristic substrings of a string consist of \emph{maximal repeats}, \emph{minimal unique substrings}, and \emph{minimal absent words}.
A maximal repeat in $T$ is defined as a substring $P$ satisfying two conditions:
(i) it occurs at least twice in the string~(i.e., $|\Occ(T, P)| \geq 2$), 
and (ii) either of the left or right extended substrings of it 
occurs fewer times than it~(i.e., $|\Occ(T, cP)|, |\Occ(T, Pc)| < |\Occ(T, P)|$ for $c \in \Sigma$). 
A minimal unique substring is defined as substring $P$ satisfying two conditions:
(i) it occurs just once~(i.e., $|\Occ(T, P)| = 1$), 
and (ii) all the substrings of it occur at least twice in the string~(i.e., $|\Occ(T, P[2..|P|])|, |\Occ(T, P[1..|P|-1])| \geq 2$). 
A minimal absent word is defined as string $P$ satisfying two conditions:
(i) it does not occur in a string~(i.e., $|\Occ(T, P)| = 0$), and (ii) all the proper substrings of it occur in the string~(i.e., 
$|\Occ(T, P[2..|P|])|, |\Occ(T, P[1..|P|-1])| \geq 1$).
For convenience, a minimal absent word is sometimes called a substring, although the string is not a substring of $T$. 

Our computation model is a unit-cost word RAM with a machine word size of $w = \Theta(\log_2 n)$ bits. We evaluate the space complexity in terms of the number of machine words. A bitwise evaluation of the space complexity can be obtained with a multiplicative factor of $\log_2 n$. We assume the base-2 logarithm throughout this paper when the base is not indicated. 

\subsection{Rank and range distinct queries}\label{sec:preliminary_queries}
Let $S \subseteq \{1, 2, \ldots, n \}$ be a set of $d$ integers.  
A \emph{rank} query $\rank(S, i)$ on $S$ returns the number of elements no more than $i$ in $S$,
i.e., $\rank(S, i) = |\{ j \mid j \in S \mbox{ s.t. } j \leq i \}|$. 
$R_{\rank}(S)$ is a \emph{rank data structure} solving 
a rank query on $S$ in $O(\log\log_{w} (n/d))$ time and with $O(dw)$ bits of space~\cite{DBLP:conf/esa/BelazzouguiN12}. 

A range distinct query, $\RD(T, b, e)$, on a string $T$ returns a set of 3-tuples $(c, p_{c}, q_{c})$ that consists of 
(i) a distinct character $c$ in $T[b..e]$, 
(ii) the first occurrence $p_{c}$ of the character $c$ for a given interval $[b..e]$ in $T$, 
and (iii) the last occurrence $q_{c}$ of the character $c$ for $[b..e]$ in $T$.
Formally, let $\Sigma(T[b..e])$ be a set of distinct characters in $T[b..e]$, i.e., $\Sigma(T[b..e]) = \{ T[i] \mid i \in [b,e] \}$.
Then, $\RD(T, b, e) = \{ (c, p_{c}, q_{c}) \mid c \in \Sigma(T[b..e]) \}$, 
where $p_{c} = \min (\Occ(T, c) \cap [b,e])$, and $q_{c} = \max (\Occ(T, c) \cap [b,e])$.
$R_{\RD}(T)$ is a \emph{range distinct data structure} solving 
a range distinct query on $T$ in $O(|\RD(T, b, e)| + 1)$ time and with $O(n \log \sigma)$ bits of space~\cite{DBLP:journals/jda/BelazzouguiNV13}.

\subsection{Suffix and longest common prefix arrays}
\begin{wrapfigure}{r}{50mm}
\begin{center}
	\includegraphics[width=0.4\textwidth]{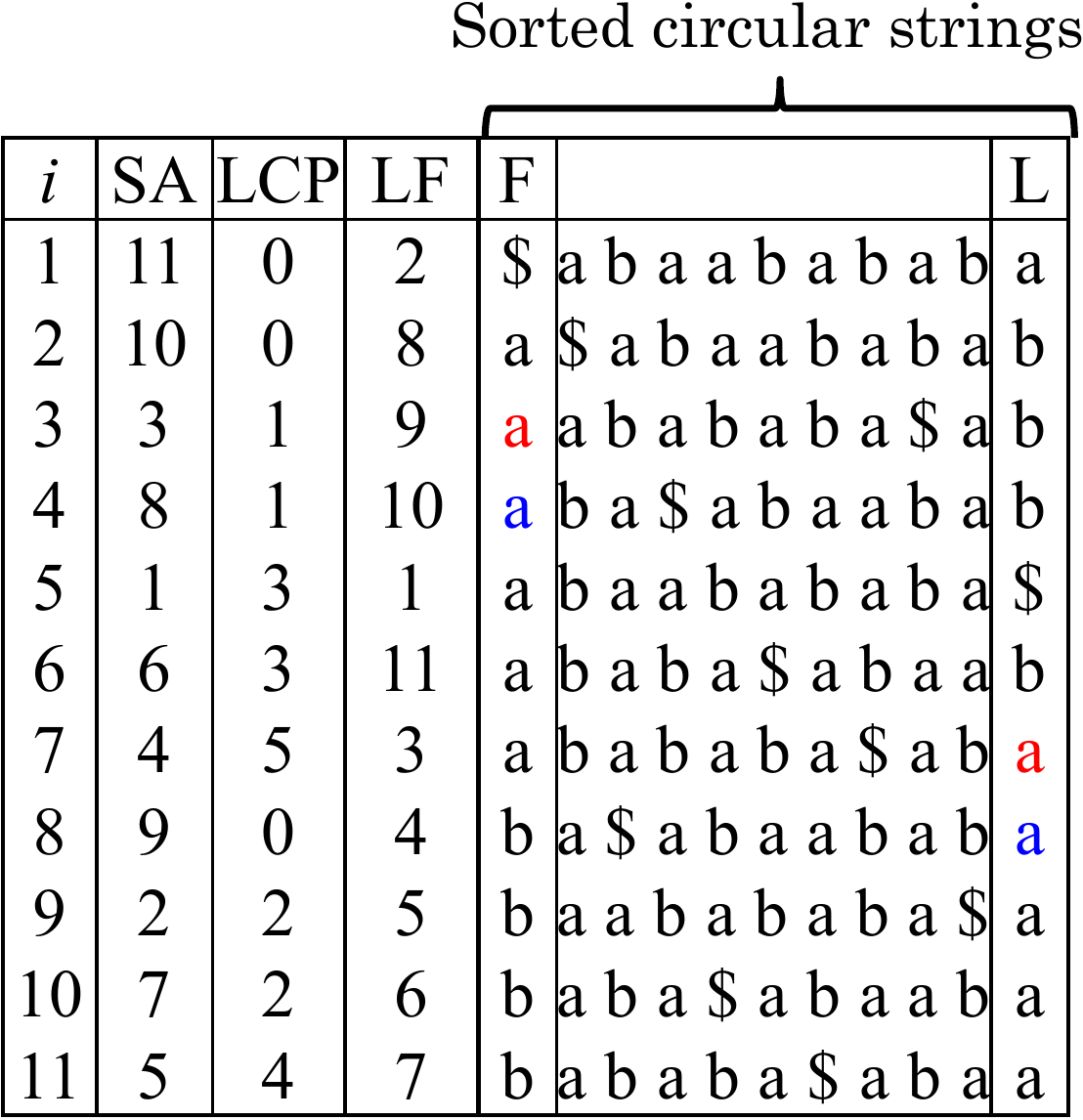}
	 \caption{Suffix array, LCP array, LF function, BWT, and circular strings for $T=abaabababa\$$.}
     \label{fig:main_figure}
\end{center}
\end{wrapfigure}

The \emph{suffix array}~\cite{DBLP:journals/siamcomp/ManberM93} $\SA$ of string $T$ is an integer array of size $n$ 
such that $\SA[i]$ stores the starting position of the $i$-th suffix of $T$ in lexicographical order.
Formally, $\SA$ is a permutation of $\{1, 2, \ldots,n \}$ such that $T[\SA[1]..n] \prec \cdots \prec T[\SA[n]..n]$. 
The \emph{longest common prefix array}~(LCP array) $\LCP$ of $T$ is 
an integer array of size $n$ such that $\LCP[1] = 0$ and $\LCP[i]$ stores 
the length of the LCP of the two suffixes $T[\SA[i]..n]$ and $T[\SA[i-1]..n]$ for $i \in \{2, 3, \ldots, n \}$.
We call the values in the suffix and LCP arrays \emph{sa-values} and \emph{lcp-values}, respectively. 
Moreover, let $\LF$ be a function such that (i) $\SA[\LF(i)] = \SA[i] - 1$ 
for any integer $i \in (\{1, 2, \ldots,n \} \setminus \{ x \})$ and 
(ii) $\SA[\LF(x)] = n$, 
where $x$ is an integer such that $\SA[x] = 1$. 
Figure~\ref{fig:main_figure} depicts the suffix array, LCP array, and LF function for a string. 


\subsection{BWT and RLBWT}\label{sec:bwt}
The BWT~\cite{burrows1994block} of string $T$ is an array $L$ built by permuting $T$ as follows;
(i) all the $n$ rotations of $T$ are sorted in lexicographical order, 
and (ii) $L[i]$ for any $i \in \{1, 2, \ldots,n \}$ is the last character at the $i$-th rotation in sorted order.
Similarly, $F[i]$ for any $i \in \{1, 2, \ldots,n \}$ is the first character at the $i$-th rotation in sorted order. 
Formally, let $L[i] = T[\SA[\LF(i)]]$ and $F[i] = T[\SA[i]]$. 

Since $L[i]$ and $L[j]$ represent two characters $T[\SA[i]-1]$ and $T[\SA[j]-1]$ for two distinct integers $i, j \in \{1, 2, \ldots, n \}$, 
the following relation holds between $\LF(i)$ and $\LF(j)$; 
$\LF(i) < \LF(j)$ if and only if either of two conditions holds:
(i) $L[i] \prec L[j]$ or (ii) $L[i] = L[j]$ and $i < j$. 
Let $C$ be an array of size $\sigma$ such that $C[c]$ is the number of occurrences of characters lexicographically smaller than $c\in \Sigma$ in string $T$,  
i.e., $C[c] = |\{ i \mid i \in [1, n] \mbox{ s.t. } T[i] \prec c \}|$. 
The following equation holds by the above relation for the LF function on BWT $L$: 
$\LF(i) = C[c] + \Occ(L[1..i], L[i])$. 
We call the equation \emph{LF formula}.

The RLBWT of $T$ is the BWT encoded by run-length encoding. 
i.e., RLBWT is a partition of $L$ into $r$ substrings $L[\ell(1)..\ell(2)-1], L[\ell(2)..\ell(3)-1], \ldots, L[\ell(r)..\ell(r+1)-1]$ 
such that each substring $L[\ell(i)..\ell(i+1)-1]$ is a maximal repetition of the same character in $L$ called a \emph{run}. 
Formally, let $r = 1 + |\{ i \mid i \in \{ 2, 3, \ldots, n \} \mbox{ s.t. } L[i] \neq L[i-1] \}|$, 
$\ell(1) = 1$, $\ell(r+1) = n+1$, and $\ell(j) = \min \{ i \mid i \in \{ \ell(j-1)+1, \ell(j-1)+2, \ldots, n \} \mbox{ s.t. } L[i] \neq L[i-1] \}$ 
for $j \in \{ 2, 3, \ldots, r \}$. 
Let $\startset$ denote the set of the starting position of each run in $L$, 
i.e., $\startset = \{ \ell(1), \ell(2), \ldots, \ell(r) \}$. 
RLBWT is represented as $r$ pairs $(L[\ell(1)], \ell(1))$, $(L[\ell(2)], \ell(2))$, $\ldots$, $(L[\ell(r)], \ell(r))$ using $2rw$ bits. 
$r \leq \sigma \leq n$ holds since we assume that 
every character in $\Sigma$ occurs in $T$. 

$\LF(i) = \LF(i-1) + 1$ holds for an integer $i \in \{2, 3, \ldots, n \}$ by LF formula if 
$i$ is not the starting position of a run in $L$~(i.e., $i \not \in \startset $). 
Similarly, $\LCP[\LF(i)] = 1 + \LCP[i]$ holds if $i \not \in \startset$. 

Figure~\ref{fig:main_figure} depicts two arrays $L$ and $F$ for string $T=abaabababa\$$. 
Since the BWT of $T=abaabababa\$$ is $abbb\$baaaaa$, 
the RLBWT of the string $T$ is $(a, 1), (b, 2), (\$, 5)$, $(b, 6)$, and $(a, 7)$. 
The red and blue characters $a$ are adjacent in a run on the BWT $L$. 
Hence, $\LF(8) = \LF(7) + 1$ holds by LF formula. Similarly, $\LCP[\LF(8)] = \LCP[8] + 1 = 1$ holds. 

\subsection{Suffix tree}
\begin{figure}
\begin{center}
	\includegraphics[width=0.45\textwidth]{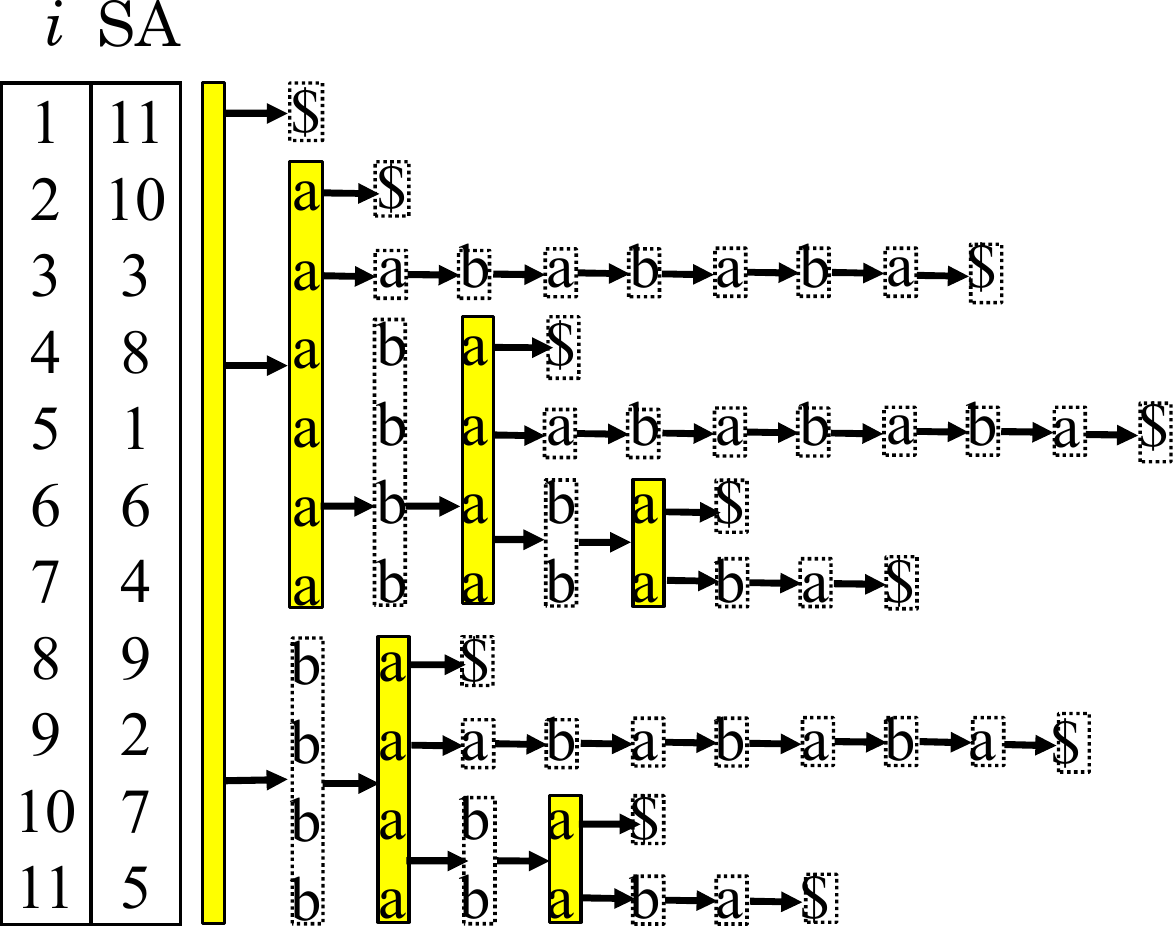}
	\includegraphics[width=0.45\textwidth]{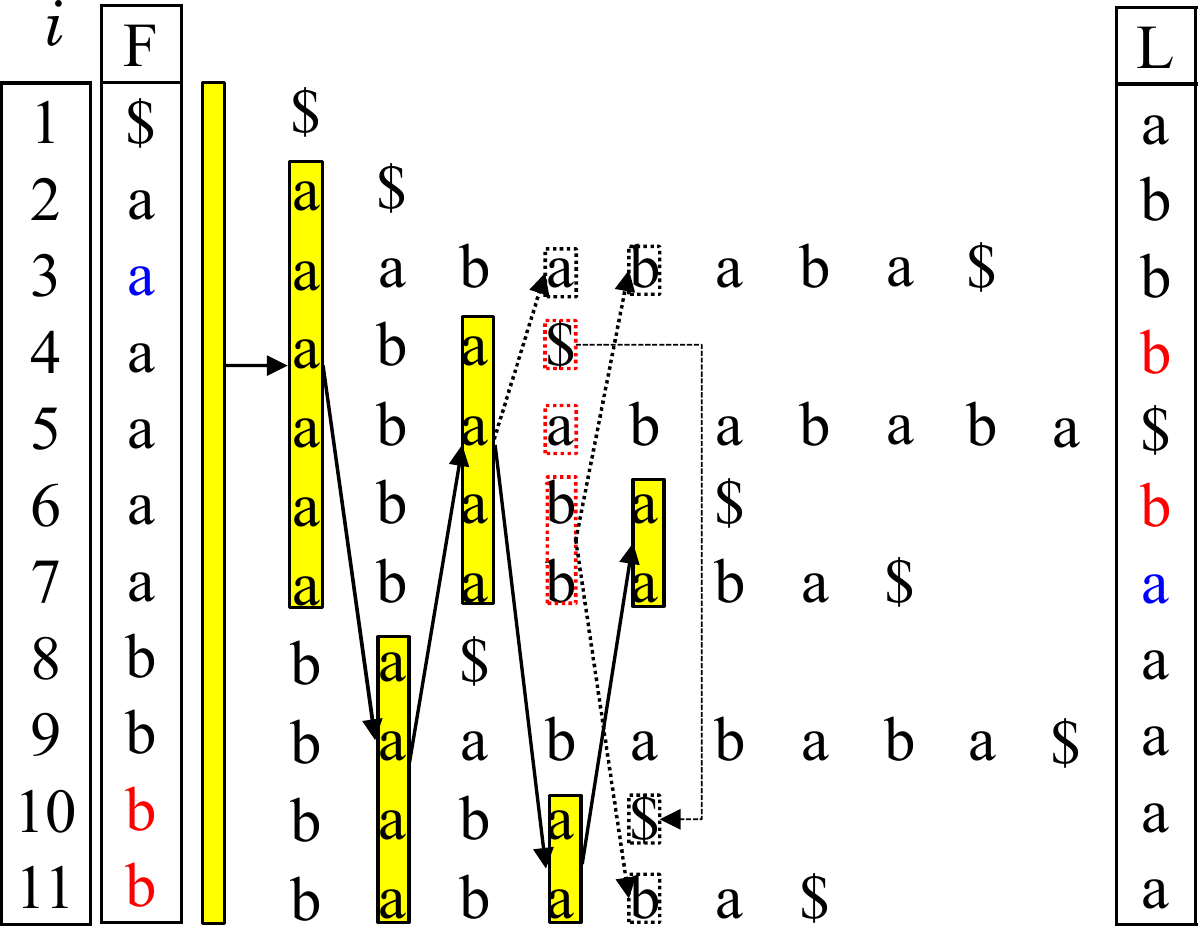}

	 \caption{
	 Left figure illustrates suffix tree of $T = abaabababa\$$ on sorted suffixes of $T$. 
	 We represent each node by its suffix-tree interval~(rectangles) and omit characters on edges. 
	 Yellow and white rectangles are explicit and implicit nodes, respectively. 
	 Right figure illustrates Weiner-link tree of $T$ and Weiner links on sorted suffixes of $T$. 
	 Tree consists of yellow rectangles and solid arrows. We omit characters on Weiner links. 
	 Solid and dotted arrows represent Weiner links pointing to explicit and implicit nodes, respectively. 
	 Red rectangles are children of node of $aba$ in suffix tree of $T$.
	 }
     \label{fig:suffix_tree}
\end{center}
\end{figure}

The \emph{suffix tree}~\cite{10.1007/978-3-642-82456-2_6} of $T$ is a trie storing all the suffixes of $T$. 
Each node $v$ represents the concatenated string on the path from the root of the tree to the node $v$. 
Let $u_{P}$ denote the node representing a substring $P$ of $T$.
The \emph{depth} of node $u_{P}$ is defined as the length of its string $P$, 
i.e., the depth of $u_{P}$ is $|P|$. 
Let $\children(P)$ denote the set of strings represented by the children of $u_{P}$.  
Formally, $\children(P) = \{ Pc \mid c \in \Sigma \mbox{ s.t. } Pc \in \substr(T) \}$. 
We call $u_{P}$ an \emph{explicit node} if it has at least two children; 
otherwise, we call $u_{P}$ an \emph{implicit node}. 
The root $u_{\varepsilon}$ of the suffix tree is explicit because $T$ contains at least two distinct characters. 
Let $\mathcal{L}_{d}$ be the set of substrings represented by all the explicit nodes with depth $d$. 
Formally, $\mathcal{L}_{d} = \{ P \mid P \in \substr(T) \mbox{ s.t. } |P| = d \mbox{ and } |\children(P)| \geq 2 \}$. 

A \emph{suffix-tree interval}~(a.k.a. \emph{suffix-array interval}) for a substring $P$ of $T$ 
is an interval $[b,e]$ on the suffix array of $T$ such that 
$\SA[b..e]$ represents all the occurrence positions of $P$ in string $T$; 
that is, for any integer $p \in \{1, 2, \ldots, n \}$, $T[p..p+|P|-1] = P$ holds if and only if $p \in \{ \SA[b], \SA[b+1], \ldots, \SA[e] \}$. 
The suffix-tree interval of the empty string $\varepsilon$ is defined as $[1, n]$. 
Let $\stinterval(P) = [b,e]$ denote the suffix-tree interval for $P$. 

\emph{Rich representation}~\cite{DBLP:journals/talg/BelazzouguiCKM20} $\repr(P)$ for $P$ is a 3-tuple $(\stinterval(P)$, $\{ (c_{1}, b_{1}, e_{1})$, $(c_{2}, b_{2}, e_{2})$, $\ldots$, $(c_{k}, b_{k}, e_{k}) \}$, $|P|)$. 
Here, $Pc_{1}, Pc_{2}, \ldots, Pc_{k}$ are strings represented by the children of node $u_{P}$, 
and $[b_{i}, e_{i}] = \stinterval(Pc_{i})$ for $i \in [1, k]$. 

Figure~\ref{fig:suffix_tree} illustrates the suffix tree of string $T = abaabababa\$$~(left figure). 
The three sets $\mathcal{L}_{0}$, $\mathcal{L}_{1}$, $\mathcal{L}_{2}$ are $\{ \varepsilon \}, \{ a \}, \{ ba \}$, respectively. 
The suffix-tree interval for substring $P = aba$ is $[4, 7]$, 
and $\children(P) = \{ aba\$, abaa, abab \}$. 
The rich representation for $P$ is $([4, 7], \{ (\$, 4, 4), (a, 5, 5), (b, 6, 7)  \}, 3)$. 

\subsection{Weiner links and Weiner-link tree}
\emph{Weiner links} are additional directed links on the suffix tree of $T$. 
Let $cP$ be a substring of $T$, 
where $c$ is a character, and $P$ is a string. 
Then, node $u_{cP}$ is the destination of a Weiner link with character $c$ starting at node $u_{P}$. 
Hence, every node $u_{P}$ must be the destination of exactly one Weiner link unless node $u_{P}$ is the root of the suffix tree~(i.e., $P = \varepsilon$). 
Node $u_{P}$ is always explicit if $u_{cP}$ is explicit because 
the explicit node $u_{cP}$ indicates that 
$T$ has two substrings $Pc_{1}$ and $Pc_{2}$, where $c_{1}$ and $c_{2}$ are distinct characters. 
Let $\Weiner(P)$ denote a set of strings such that 
each string represented by the destination node of a Weiner link starts at node $u_{P}$~(i.e., 
$\Weiner(P) = \{ cP \mid c \in \Sigma \mbox{ s.t. } cP \in \substr(T) \}$). 

A Weiner-link tree~(a.k.a \emph{suffix-link tree}) for $T$ is a graph such that 
(i) the nodes are all the explicit nodes in the suffix tree of $T$
and (ii) the edges are all the Weiner links among the explicit nodes. 
Since any explicit node is the destination of a Weiner link starting at another explicit node, 
the graph results in a tree. 
Each child of a node $u_{P}$ represents a string in $\Weiner(P)$ in the Weiner-link tree.

Let $p_{c}$ and $q_{c}$ be the first and last occurrences of a character $c$ on the suffix-tree interval for a substring $P$ in the BWT $L$ of $T$, respectively. 
Then, $[\LF(p_{c}), \LF(q_{c})]$ is equal to the suffix-tree interval of $cP$. 
Hence, we can compute the suffix-tree intervals for the destinations of all the Weiner links starting at node $u_{P}$ by using 
a range distinct query and LF function.
Formally, the following lemma holds. 
\begin{lemma}\label{lem:weiner}
$\{ \stinterval(cP) \mid cP \in \Weiner(P) \} = \{ [\LF(p_{c}), \LF(q_{c})] \mid (c, p_{c}, q_{c}) \in \RD(L, b, e) \\ \mbox{ s.t. } c \neq \$ \}$ holds for a substring $P$ of $T$, 
where $\stinterval(P) = [b,e]$. 
\end{lemma}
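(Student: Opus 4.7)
The plan is to establish the set equality by the standard BWT/suffix-array double containment, anchored on the two fundamental facts about LF already stated earlier in the excerpt: the LF formula, and its order-preservation property within each character class. I will only need to combine these with the interpretation of the BWT column $L[b..e]$ as the bag of characters preceding the occurrences of $P$.

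First I would unpack what the interval $[b,e] = \stinterval(P)$ gives us. By definition, $\SA[b..e]$ lists every occurrence position of $P$ in $T$. Since $L[i] = T[\SA[i]-1]$ whenever $\SA[i] \neq 1$ (and equals $\$$ otherwise), the multiset of characters in $L[b..e]$ equals the multiset of left-extensions of occurrences of $P$, plus possibly a single $\$$. Consequently, for a character $c \neq \$$, the condition $c \in \Sigma(L[b..e])$ is equivalent to $cP \in \substr(T)$, i.e., to $cP \in \Weiner(P)$. This already matches the index sets on both sides of the claimed equality, so what remains is to show that for each such $c$, the interval $[\LF(p_c), \LF(q_c)]$ coincides with $\stinterval(cP)$.

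Next I would exploit the key order-preservation property derived from the LF formula: restricted to the positions of $L$ carrying a fixed character $c$, the map $\LF$ is strictly monotone and sends the $k$-th occurrence of $c$ in $L$ to $C[c]+k$. In particular, the image under $\LF$ of $\{ i \in [b,e] \mid L[i] = c \}$ is a contiguous integer interval, and since $p_c$ and $q_c$ are by definition the minimum and maximum such $i$, that interval is exactly $[\LF(p_c), \LF(q_c)]$. Then I would identify this image with $\stinterval(cP)$: for each $i \in [b,e]$ with $L[i] = c$, the LF formula gives $\SA[\LF(i)] = \SA[i] - 1$, which is an occurrence of $cP$; conversely, every occurrence of $cP$ at position $k$ yields $P$ at position $k+1 = \SA[i]$ for some $i \in [b,e]$, forcing $L[i] = T[k] = c$ and thus $\LF(i)$ an index in the image. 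Hence the image enumerates exactly the lex-positions of suffixes starting with $cP$, which is $\stinterval(cP)$ by definition.

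I do not foresee a genuine obstacle here; the proof is essentially a bookkeeping argument. The only point that requires mild care is making sure the exclusion $c \neq \$$ matches the carve-out in $L$ caused by the single rotation where $\SA[i] = 1$, so that no occurrence of $cP$ is missed and no spurious $\$$-entry of $L[b..e]$ is interpreted as a Weiner-link character. Once that is handled, contiguity of the image of $\LF$ on a fixed-character class delivers the interval form, and the two directions above close the set equality.
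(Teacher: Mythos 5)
Your proof is correct and follows essentially the same route as the paper's: both rest on the order-preservation of $\LF$ within a fixed character class and the fact that suffixes sharing the prefix $cP$ occupy a contiguous block of the suffix array, so that $\LF(p_c)$ and $\LF(q_c)$ bound exactly $\stinterval(cP)$. The paper states this in one line via the extremal suffixes having $cP$ as a prefix, whereas you spell out the double containment and the $\$$ carve-out explicitly, but the underlying argument is the same.
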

\begin{proof}
Since suffix $T[\SA[i]..n]$ has $P$ as a prefix for each $i \in \stinterval(P)$, 
two suffixes $T[\SA[\LF(p_{c})]..n]$ and $T[\SA[\LF(q_{c})]..n]$ are the lexicographically smallest and largest suffixes having $cP$ as a prefix. 
Hence, Lemma~\ref{lem:weiner} holds.
\end{proof}

Let $Q_{P}$ denote an array of size $\sigma$ for a substring $P$ of $T$ such that 
$Q_{P}[c]$ stores set $\{ (c', b, e) \mid cPc' \in \children(cP) \}$ for a character $c \in \Sigma$ if $cP$ is a substring of $T$; 
otherwise, $Q_{P}[c] = \emptyset$, 
where $[b,e] = \stinterval(cPc')$. 
The array $Q_{P}$ has three properties for any character $c \in \Sigma$: 
(i) $cP \in \Weiner(P)$ holds if and only if $|Q_{P}[c]| \geq 1$ holds, 
(ii) node $u_{cP}$ is explicit if and only if $|Q_{P}[c]| \geq 2$ holds, and 
(iii) $Q_{P}[c] = \bigcup_{Pc' \in \children(P)} \{ (c', b, e) \mid \hat{c}Pc' \in \Weiner(Pc') \mbox{ s.t. } \hat{c} = c \}$ holds.
In other words, each child of node $u_{cP}$ is the destination of a Weiner link starting at a child of node $u_{P}$ in the suffix tree.

Let $\mathcal{B}$ be a data structure supporting a range distinct query on BWT $L$ in $O((1+k) x)$ time and computing LF function in $O(x')$ time. 
Here, (i) $k$ is the number of elements output by the range distinct query, and (ii) $x$ and $x'$ are terms. 
We can compute the children of a node $u_{P}$ in the Weiner-link tree by using $Q_{P}$ and $\mathcal{B}$ without explicitly constructing the Weiner-link tree. 
Formally, the following lemma holds. 
\begin{lemma}[\cite{DBLP:conf/spire/BelazzouguiC15}]\label{lem:weiner_algorithm}
We can compute set $\mathcal{Y} = \{ \repr(cP) \mid cP \in \Weiner(P) \}$ 
by using (i) $\repr(P)$, (ii) data structure $\mathcal{B}$, and (iii) an empty array $X$ of size $\sigma$. 
After that, we can divide the set $\mathcal{Y}$ into 
two sets for explicit and implicit nodes. 
The computation time and working space are $O(h(x + x'))$ and $O( (\sigma + h')w)$ bits, respectively, 
where $h = \sum_{Pc' \in \children(P)} |\Weiner(Pc')|$, and 
$h' = \sum_{cP \in \Weiner(P) \cap \mathcal{L}_{|P|+1}} |\children(cP)|$. 
\end{lemma}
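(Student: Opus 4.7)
The plan is to fill the array $X$, acting as a workspace for $Q_P$, by exploiting property~(iii) above: each child of every Weiner destination $u_{cP}$ is itself the target of a Weiner link leaving some child of $u_P$ in the suffix tree. Hence, if I visit each child $u_{Pc_i}$ of $u_P$ (which are all encoded in $\repr(P)$) and enumerate the Weiner links out of $u_{Pc_i}$, bucketing the resulting destinations by their incoming character $c$, the bucket $X[c]$ ends up containing exactly $Q_P[c]$. The spanning interval of the triples in $X[c]$ equals $\stinterval(cP)$, and $|X[c]| \geq 2$ versus $|X[c]| = 1$ distinguishes explicit from implicit Weiner destinations. Lemma~\ref{lem:weiner}, applied to each child's suffix-tree interval rather than to $\stinterval(P)$ itself, is the primitive that produces those Weiner-link destinations together with their suffix-tree intervals.

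The concrete algorithm iterates over the children $(c_i, b_i, e_i)$ in $\repr(P)$ and invokes $\RD(L, b_i, e_i)$ via $\mathcal{B}$. For every returned triple $(c, p_c, q_c)$ with $c \neq \$$, it applies Lemma~\ref{lem:weiner} through two LF evaluations to obtain $\stinterval(cPc_i) = [\LF(p_c), \LF(q_c)]$ and appends $(c_i, \LF(p_c), \LF(q_c))$ to $X[c]$. By property~(iii), once every child of $u_P$ has been processed, $X[c] = Q_P[c]$ for each $c$, so a single scan of the non-empty buckets produces $\repr(cP) = (\stinterval(cP), X[c], |P|+1)$ with $\stinterval(cP) = [\min\{b : (c', b, e) \in X[c]\},\ \max\{e : (c', b, e) \in X[c]\}]$. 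The partition of $\mathcal{Y}$ into explicit and implicit rich representations follows immediately from the cardinality of each $X[c]$.

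For the running time, each child $u_{Pc_i}$ contributes $|\Weiner(Pc_i)|$ output triples from $\RD$, and each is processed with a constant number of LF queries, so the total cost is $\sum_i |\Weiner(Pc_i)| \cdot O(x + x') = O(h(x+x'))$. For the space bound, the array $X$ itself costs $O(\sigma w)$ bits, and the main subtlety is keeping the bucket storage down to $O(h' w)$ rather than the $O(hw)$ that a naive single-pass aggregation would use. The trick I would invoke is a two-pass variant: a first pass maintains only a counter and a single witness triple per bucket (so $O(\sigma w)$ bits total), after which implicit destinations (those with final $|X[c]|=1$) are flushed to the output stream using their witness triple; a second pass then repopulates exactly the buckets with final count $\geq 2$, whose total size is $h'$ by the definition of $\mathcal{L}_{|P|+1}$. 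The time bound absorbs the constant-factor overhead of the two passes, yielding $O(h(x+x'))$ time and $O((\sigma + h')w)$ bits of working space as claimed.
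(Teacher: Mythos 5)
Your proposal is correct and follows essentially the same route as the paper: fill the buckets $X[c]$ by issuing range distinct queries on the children's intervals stored in $\repr(P)$, apply LF to get the destination intervals, use property~(iii) to conclude $X[c]=Q_P[c]$, and read off explicit versus implicit from $|X[c]|$. The one place you diverge is the space analysis: the two-pass witness-and-repopulate trick is unnecessary, because the naive single-pass aggregation already fits the bound --- every bucket with $|X[c]|=1$ costs $O(w)$ bits and there are at most $\sigma$ of them, while the buckets with $|X[c]|\geq 2$ have total size exactly $h'$, so the total is $O((\sigma+h')w)$ bits without any second pass (this is precisely the paper's accounting). Your alternative computation of $\stinterval(cP)$ as the span of the triples in $X[c]$ is also valid, since the children's intervals partition the parent's interval; the paper instead obtains it directly from a range distinct query on $\stinterval(P)$ via Lemma~\ref{lem:weiner}, but both give the same answer.
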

\begin{proof}
We compute the outputs with the following three steps. 
At the first step, 
we compute set $\{ \stinterval(cP) \mid cP \in \Weiner(P) \}$ and convert the empty array $X$ into $Q_{P}$ 
by using Lemma~\ref{lem:weiner} and the third property of $Q_{P}$.
At the second step, we output the rich representation $\repr(cP) = (\stinterval(cP), Q_{P}[c], |P|+1)$ for each $cP \in \Weiner(P)$ 
and divide the rich representations into two sets for explicit and implicit nodes by using the second property of $Q_{P}$. 
At the last step, 
we remove all the elements from $Q_{P}[c]$ for each $cP \in \Weiner(P)$ to recover $X$ from $Q_{P}$. 
We perform the three steps by using range distinct queries to the suffix-tree intervals for node $u_{P}$ and its children, 
and the intervals stored in $\repr(P)$. 
Hence, the running time is $O(h(x + x'))$ in total. 

Next, we analyze the working space. 
The rich representation for a node $u_{P'}$ takes $O(w)$ bits if $u_{P'}$ is implicit 
because it has at most one child. 
Otherwise, $\repr(P')$ takes $O(|\children(P')|w)$ bits. 
Hence, the working space is $O((\sigma + h')w)$ bits in total.  
\end{proof}

Figure~\ref{fig:suffix_tree} illustrates the Weiner-link tree of string $T = abaabababa\$$~(right figure). 
Since string $T$ contains two substrings $aaba$ and $baba$, 
the node of $aba$ has two Weiner links pointing to the nodes of $aaba$ and $baba$, i.e., $\Weiner(aba) = \{ aaba, baba \}$. 
The Weiner-link tree contains the node of $baba$ but not that of $aaba$ 
because the former and latter nodes are explicit and implicit, respectively. 

The three suffix-tree intervals for $aba$, $aaba$, and $baba$ are $[4, 7]$, $[3, 3]$, and $[10, 11]$, respectively. 
A range distinct query on $\stinterval(aba)$ in the BWT of $T$ returns the set $\{ (\$, 5, 5), (a, 7, 7), (b, 4, 6) \}$. 
Hence, $\stinterval(aaba) = [\LF(7)$, $\LF(7)]$ and $\stinterval(baba) = [\LF(4)$, $\LF(6)]$ hold 
by Lemma~\ref{lem:weiner}~(See also red and blue characters on two arrays $F$ and $L$ in Figure~\ref{fig:suffix_tree}). 
Figure~\ref{fig:suffix_tree} also illustrates the children of the node of $aba$ in the suffix tree and 
Weiner links starting from the children. 
In this example, $Q_{P}[\$] = \emptyset$, $Q_{P}[a] = \{ (b, 3, 3) \}$, and $Q_{P}[b] = \{ (\$, 10, 10), (b, 11, 11)  \}$, 
where $P = aba$. 
Hence, the node of $aaba$ has one child, and the node of $baba$ has two children in the suffix tree.

\section{Traversing Weiner-link tree in \texorpdfstring{$O(rw)$}{O(rw)} bits of space}\label{sec:traverse}
In this section, we present a breadth-first traversal algorithm for the Weiner-link tree of $T$ in $O(rw)$ bits of working space. 
The traversal algorithm outputs $n$ sets $\{ \repr(P) \mid P \in \mathcal{L}_{0} \}$, $\{ \repr(P) \mid P \in \mathcal{L}_{1} \}$, $\ldots$, 
$\{ \repr(P) \mid P \in \mathcal{L}_{n-1} \}$ in left-to-right order. 
Here, each set $\{ \repr(P) \mid P \in \mathcal{L}_{t} \}$ represents the set of the rich representations for all the nodes with depth $t$ in the Weiner-link tree. 
Hence each node $u_{P}$ is represented as its rich representation $\repr(P)$. 

\subsection{Data structures}\label{sec:data_structures}
Our traversal algorithm uses six data structures: 
(i) the RLBWT of $T$~(i.e., $r$ pairs $(L[\ell(1)], \ell(1))$, $(L[\ell(2)], \ell(2))$, $\ldots$, $(L[\ell(r)], \ell(r))$, which are introduced in Section~\ref{sec:bwt}), 
(ii) the rich representation for the root of the Weiner-link tree of $T$~(i.e., $\repr(\varepsilon)$), 
(iii) $D_{\LF}$, 
(iv) $R_{\rank}(\startset)$, 
(v) $R_{\RD}(L')$, 
and (vi) an empty array $X$ of size $\sigma$.
$D_{\LF}$ is an array of size $r$ such that 
$D_{\LF}[i] = \LF(\ell(i))$ for $i \in \{ 1, 2, \ldots, r \}$. 
$R_{\rank}(\startset)$ is the rank data structure introduced in Section~\ref{sec:preliminary_queries}, 
and it is built on the set $\startset = \{ \ell(1), \ell(2), \ldots, \ell(r) \}$ introduced in Section~\ref{sec:bwt}. 
Similarly, $R_{\RD}(L')$ is the range distinct data structure introduced in Section~\ref{sec:preliminary_queries}, 
and it is built on string $L' = L[\ell(1)], L[\ell(2)], \ldots, L[\ell(r)]$. 
The six data structures require $O( (r + \sigma)w )$ bits in total. 
We construct the six data structures in $O(n \log \log_{w} (n/r))$ time and $O(rw)$ bits of working space by processing the RLBWT of $T$~(see Appendix A). 

We use the two data structures $D_{\LF}$ and $R_{\rank}(\startset)$ to compute LF function. 
Let $x$ be the index of a run containing the $i$-th character of BWT $L$~(i.e., $x = \rank(\startset, i)$) for an integer $i \in \{ 1, 2, \ldots, n \}$. 
$\LF(i) = \LF(\ell(x)) + |\Occ(L[\ell(x)..i], L[i])| - 1$ holds by LF formula,  
and $|\Occ(T[\ell(x)..i], L[i])| = i - \ell(x) + 1$ holds 
because 
$L[\ell(x)..i]$ consists of a repetition of the $i$-th character $L[i]$. 
Hence, $\LF(i) = D_{\LF}[x] + (i - \ell(x))$ holds, and 
we can compute LF function in $O(\log \log_{w} (n/r))$ time by using $D_{\LF}$ and $R_{\rank}(\startset)$.

We use the fifth data structure $R_{\RD}(L')$ to compute a range distinct query on the BWT $L$ of $T$. 
Let $b'$ and $e'$ be the indexes of the two runs on $L$ containing 
two characters $L[b]$ and $L[e]$, respectively~(i.e., $b' = \rank(\startset, b)$ and $e' = \rank(\startset, e)$), for an interval $[b,e] \subseteq \{1, 2, \ldots, n \}$. 
Then the following relation holds between two sets $\RD(L, b, e)$ and $\RD(L', b', e')$. 
\begin{lemma}\label{lem:range_query}
$\RD(L, b, e) = \{ (c, \max\{ \ell(p), b \}, \min \{ \ell(q+1)-1, e \}) \mid (c, p, q) \in \RD(L', b', e') \}$ holds. 
\end{lemma}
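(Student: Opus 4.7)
The plan is to reduce the claim to two observations: (a) the set of distinct characters appearing in $L[b..e]$ coincides with the set of distinct characters appearing in the run-head string $L'[b'..e']$, and (b) for each such character $c$, the positions of its first and last occurrences in $L[b..e]$ are obtained from the corresponding first and last run indices in $\RD(L', b', e')$ by the indicated clipping with $\max\{\cdot, b\}$ and $\min\{\cdot, e\}$.

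First I would observe that by the definitions $b' = \rank(\startset, b)$ and $e' = \rank(\startset, e)$, the window $L[b..e]$ is covered exactly by (portions of) the runs indexed $b', b'+1, \ldots, e'$: run $b'$ starts at $\ell(b') \leq b \leq \ell(b'+1)-1$, run $e'$ ends at $\ell(e'+1)-1 \geq e \geq \ell(e')$, and every intermediate run lies entirely inside $[b,e]$. Since each such run contributes at least one position to $[b,e]$ and consists of a single character equal to $L'[p]$, we obtain $\Sigma(L[b..e]) = \{L'[p] \mid p \in [b', e']\} = \Sigma(L'[b'..e'])$, proving (a).

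Next, fix a character $c$ with $(c, p, q) \in \RD(L', b', e')$. For the first occurrence, I would use minimality of $p$: no run with index in $[b', p-1]$ has head character $c$, hence the leftmost position of $c$ in $L[b..e]$ lies in run $p$. If $p > b'$, then $\ell(p) \geq \ell(b'+1) > b$, so run $p$ is entirely inside the window and its first position $\ell(p) = \max\{\ell(p), b\}$ is the answer. If $p = b'$, then $\ell(p) \leq b$, so positions of run $p$ to the left of $b$ are clipped out, and the first occurrence in the window is $b = \max\{\ell(p), b\}$. A symmetric argument using maximality of $q$ handles the last occurrence: for $q < e'$ the run ends at $\ell(q+1)-1 \leq \ell(e')-1 < e$ entirely inside the window, yielding $\ell(q+1)-1 = \min\{\ell(q+1)-1, e\}$; for $q = e'$ the run extends to $\ell(e'+1)-1 \geq e$, and the last occurrence in the window is $e = \min\{\ell(q+1)-1, e\}$.

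The only subtle point is the boundary bookkeeping for $p = b'$ and $q = e'$, where the relevant runs poke outside $[b,e]$; the $\max$ and $\min$ are precisely what is needed to clip them. Once this is handled, set equality is immediate because the mapping $(c, p, q) \mapsto (c, \max\{\ell(p), b\}, \min\{\ell(q+1)-1, e\})$ is a bijection between $\RD(L', b', e')$ and $\RD(L, b, e)$, both indexed by the common set of distinct characters established in (a).
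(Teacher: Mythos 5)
Your proof is correct and follows essentially the same route as the paper's: decompose $L[b..e]$ into the (possibly clipped) runs indexed $b',\ldots,e'$, observe that the distinct characters of $L[b..e]$ and $L'[b'..e']$ coincide, and obtain the first and last occurrences of each character by clipping $\ell(p)$ and $\ell(q+1)-1$ to the window via $\max$ and $\min$. The only difference is that you spell out the minimality/maximality argument for $p$ and $q$ in slightly more detail than the paper does; the phrase ``entirely inside the window'' is a harmless overstatement for the boundary runs, but the positions you actually use ($\ell(p)$ and $\ell(q+1)-1$) are handled correctly.
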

\begin{proof}
$L[b..e]$ consists of $e'-b'+1$ repetitions $L[b..\ell(b'+1) - 1]$, $L[\ell(b'+1)..\ell(b'+2)-1]$, $\ldots$, $L[\ell(e'-1)..\ell(e')-1]$, $L[\ell(e')..e]$.  
Hence, the following three statements hold: 
(i) The query $\RD(L', b', e')$ returns all the distinct characters in $L[b..e]$~(i.e., $\{ L[i] \mid i \in [b,e] \} = \{ L'[i] \mid i \in [b',e'] \}$). 
(ii) 
Let $p$ be the first occurrence of a character $c$ on $[b', e']$ in $L'$. 
Then, the first occurrence of a character $c$ on $[b,e]$ in $L$ is equal to $\ell(p)$ if $\ell(p) \geq b$; 
otherwise, the first occurrence is equal to $b$.  
(iii) Similarly, let $q$ be the last occurrence of the character $c$ on $[b', e']$ in $L'$. 
Then, the last occurrence of $c$ on $[b,e]$ in $L$ is equal to $\ell(q+1)-1$ if $\ell(q+1)-1 \leq e$; 
otherwise, the last occurrence is equal to $e$. 
We obtain Lemma~\ref{lem:range_query} with the three statements. 
\end{proof}
Lemma~\ref{lem:range_query} indicates that 
we can solve a range distinct query on $L$ by using two rank queries on $\startset$ and a range distinct query on $L'$. 
The range distinct query on $L$ takes $O(\log \log_{w} (n/r) + k)$ time, where $k$ is the number of output elements.

We use the empty array $X$ to compute rich representations by using Lemma~\ref{lem:weiner_algorithm}. 
The algorithm of Lemma~\ref{lem:weiner_algorithm} takes $O((k+1) \log \log_{w} (n/r))$ time by using the six data structures 
since we can compute LF function and solve the range distinct query on $L$ in $O(\log \log_{w} (n/r))$ and $O((k+1) \log \log_{w} (n/r))$ time, respectively.

\subsection{Algorithm}\label{sec:lcp-interval_algorithm}
The basic idea behind our breadth-first traversal algorithm is to traverse the Weiner-link tree without explicitly building the tree 
in order to reduce the working space. 
The algorithm computes nodes sequentially by using Lemma~\ref{lem:weiner_algorithm}. 
Recall that the algorithm of Lemma~\ref{lem:weiner_algorithm} returns the rich representations for all the children of a given explicit node $u_{P}$ 
in the Weiner-link tree~(i.e., it returns set $\{ \repr(cP) \mid cP \in \Weiner(P) \mbox{ s.t. } cP \in \mathcal{L}_{|P|+1} \}$). 
This fact indicates that 
the set of the rich representations for all the nodes with a depth $t \geq 1$ is equal to 
the union of the sets of rich representations obtained by applying Lemma~\ref{lem:weiner_algorithm} to all the nodes with depth $t-1$ in the Weiner-link tree, i.e., 
$\{ \repr(P) \mid P \in \mathcal{L}_{t}\} = \bigcup_{P \in \mathcal{L}_{t-1}} \{ \repr(cP) \mid cP \in \Weiner(P) \}$. 

Our algorithm consists of $(n-1)$ steps. 
At each $t$-th step, 
the algorithm (i) applies Lemma~\ref{lem:weiner_algorithm} to each representation in set $\{ \repr(P) \mid P \in \mathcal{L}_{t-1}\}$, 
(ii) outputs set $\{ \repr(P) \mid P \in \mathcal{L}_{t}\}$, 
and (iii) removes the previous set $\{ \repr(P) \mid P \in \mathcal{L}_{t-1}\}$ from working memory. 
The algorithm can traverse the whole Weiner-link tree in breadth-first order 
because we initially store the first set $\{ \repr(P) \mid P \in \mathcal{L}_{0} \}$ for the first step.

\subsection{Analysis}\label{sec:analysis}
\begin{figure}
\begin{center}
	\includegraphics[width=0.3\textwidth]{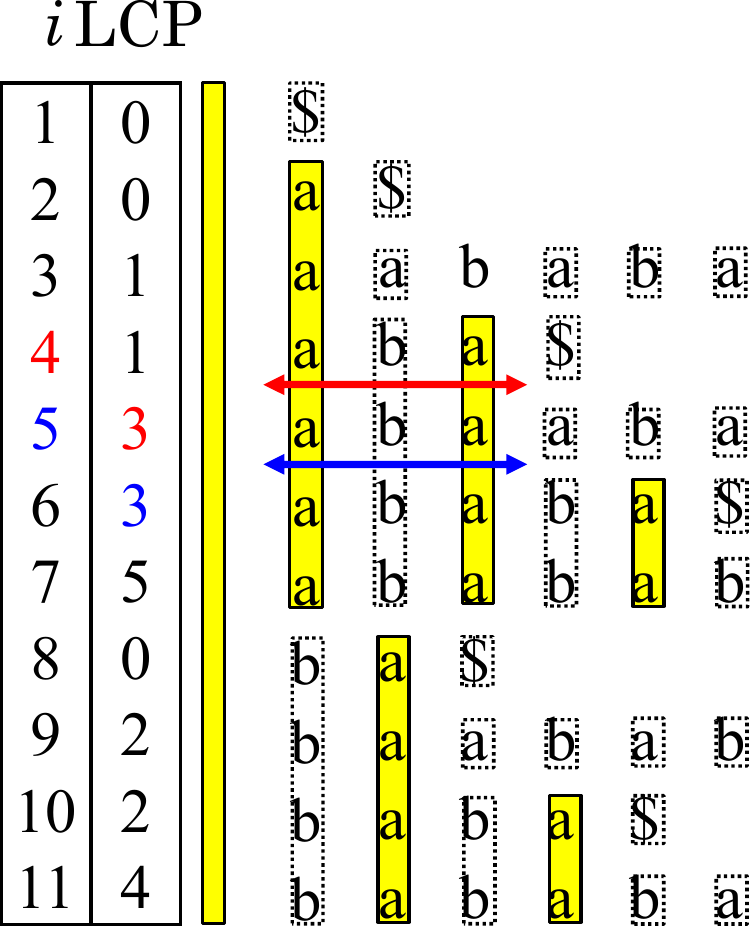}
	\includegraphics[width=0.3\textwidth]{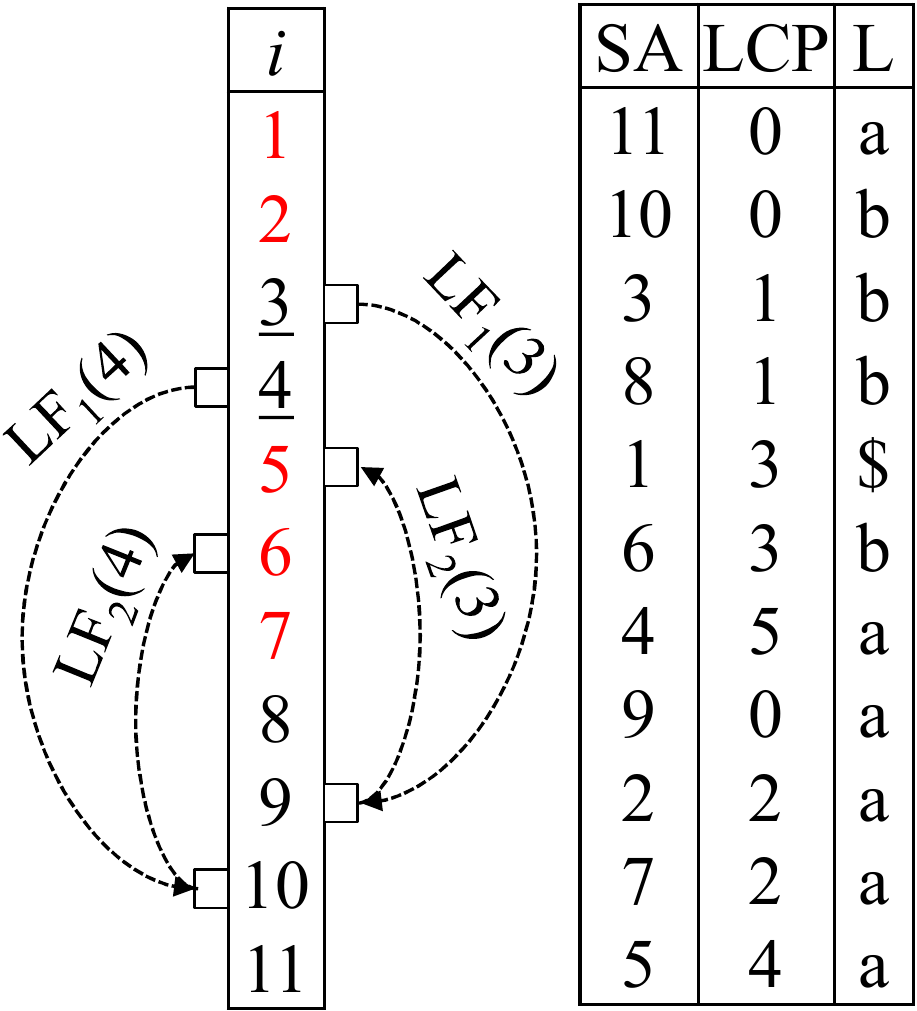}

	 \caption{
	 Left figure illustrates part of suffix tree of $T = abaabababa\$$ on sorted suffixes of $T$. 
	 Two colored integers in array named $i$ are positions in $\mathcal{K}'_{2} = \{ 4, 5 \}$. 
	 Similarly, two colored integers in LCP array correspond to positions in $\mathcal{P}_{2} = \{ 5, 6 \}$. 
	 Right figure illustrates array $i$, SA, LCP array, and BWT $L$ for $T = abaabababa\$$. 
	 Red integers in array $i$ are integers in set $\startset = \{ 1, 2, 5, 6, 7 \}$. 
	 Similarly, underlined integers are integers in set $\mathcal{P}_{1} = \{ 3, 4 \}$. 
	 }
     \label{fig:analysis}
\end{center}
\end{figure}

The traversal algorithm requires $O(H_{t-1} \log \log_{w} (n/r))$ computation time at the $t$-th step. 
Here, $H_{t}$ is the number of Weiner links starting from the children of the explicit nodes 
with depth $t$~(i.e., $H_{t} = \sum_{P \in \mathcal{L}_{t}} \sum_{Pc \in \children(P)} |\Weiner(Pc)|$). 
The running time is $O((\sum_{t = 0}^{n} H_{t}) \log \log_{w} (n/r))$ in total. 
The term $\sum_{t = 0}^{n} H_{t}$ represents the number of Weiner links starting from the children of all the explicit nodes in the suffix tree. 
Belazzougui et al. showed that $\sum_{t = 0}^{n} H_{t} = O(n)$ holds~\cite[Observation 1]{DBLP:journals/talg/BelazzouguiCKM20}. 
Hence our traversal algorithm runs in $O(n \log \log_{w} (n/r))$ time. 

Next, we analyze the working space of the traversal algorithm. 
Let $\mathcal{K}_{t}$ be the set of the children of 
all the explicit nodes with depth $t$ in the suffix tree of $T$~(i.e., $\mathcal{K}_{t} = \bigcup_{P \in \mathcal{L}_{t}} \children(P)$).  
Then, the algorithm requires $O( (r + \sigma + |\mathcal{K}_{t-1}| + |\mathcal{K}_{t}|)w)$ bits of working space 
while executing the $t$-th step. 
Hence our traversal algorithm requires $O( (r + \sigma + \max \{ |\mathcal{K}_{0}|, |\mathcal{K}_{1}|, \ldots, |\mathcal{K}_{n}| \} )w ))$ bits of working space 
while the algorithm runs. 

We introduce a set $\mathcal{K}'_{t}$ to analyze the term $|\mathcal{K}_{t}|$ for an integer $t$. 
The set $\mathcal{K}'_{t}$ consists of 
the children of all the explicit nodes with depth $t$ except for the last child of each explicit node, 
i.e., $\mathcal{K}'_{t} = \bigcup_{P \in \mathcal{L}_{t}} \{ Pc \mid Pc \in \children(P) \mbox{ s.t. } e' \neq e \}$, 
where $[b,e] = \stinterval(P)$, and $[b',e'] = \stinterval(Pc)$. 
$|\mathcal{K}_{t}| = |\mathcal{K}'_{t}| + |\mathcal{L}_{t}|$ holds 
because every explicit node has exactly one last child.  
$|\mathcal{L}_{t}| \leq |\mathcal{K}'_{t}|$ also holds  
because every explicit node has at least two children in the suffix tree of $T$. 
Hence, we obtain the inequality $|\mathcal{K}_{t}| \leq 2|\mathcal{K}'_{t}|$. 

We also introduce a set $\mathcal{P}_{t}$ for an integer $t$. 
The set $\mathcal{P}_{t}$ consists of positions with lcp-value $t$ on the LCP array of $T$ except for the first lcp-value $LCP[1]$, 
i.e., $\mathcal{P}_{t} = \{ i \mid i \in \{ 2, 3, \ldots, n \} \mbox{ s.t. } \LCP[i] = t \}$. 
Let $P$ be the longest common prefix of $T[\SA[i-1]..n]$ and $T[\SA[i]..n]$ for position $i \in \mathcal{P}_{t}$, 
i.e., $P = T[\SA[i]..\SA[i] + \LCP[i] - 1]$. 
Then node $u_{P}$ is explicit, and 
two nodes $u_{Pc}$ and $u_{Pc'}$ are adjacent children of $u_{P}$, 
where $c = T[\SA[i-1]+ \LCP[i]]$ and $c' = T[\SA[i]+ \LCP[i]]$. 
The two adjacent children $u_{Pc}$ and $u_{Pc'}$~($Pc \prec Pc'$) of an explicit node $u_{P}$ 
indicate that $\LCP[i] = |P|$ holds, where $i$ is the left boundary of $\stinterval(Pc')$. 
Hence, there exists a one-to-one correspondence between $\mathcal{P}_{t}$ and $\mathcal{K}'_{t}$, 
and $|\mathcal{P}_{t}| = |\mathcal{K}'_{t}|$ holds. 
We obtain the inequality $|\mathcal{K}_{t}| \leq 2|\mathcal{P}_{t}|$ by $|\mathcal{P}_{t}| = |\mathcal{K}'_{t}|$ and $|\mathcal{K}_{t}| \leq 2|\mathcal{K}'_{t}|$.

In Figure~\ref{fig:analysis}, the left figure represents a part of a suffix tree. 
In this example, $\mathcal{P}_{3} = \{ 5, 6 \}$, and $\mathcal{K}'_{3} = \{ 4, 5 \}$. 
Obviously, the two positions $5$ and $6$ in $\mathcal{P}_{3}$ correspond to the two positions $4$ and $5$ in $\mathcal{K}'_{3}$ 
with the adjacent children of explicit node $u_{aba}$, respectively.

Next, we introduce two functions $\LF_{x}$ and $\mathsf{dist}(i)$ to analyze $|\mathcal{P}_{t}|$.
The first function $\LF_{x}(i)$ returns the position obtained by recursively applying LF function to $i$ $x$ times, 
i.e., $\LF_{0}(i) = i$ and $\LF_{x}(i) = \LF_{x-1}(\LF(i))$ for $x \geq 1$. 
The second function $\mathsf{dist}(i) \geq 0$ returns the smallest integer such that $\LF_{\mathsf{dist}(i)}(i)$ is the starting position of a run on $L$~(i.e., 
$\LF_{\mathsf{dist}(i)}(i) \in \startset $). 
Formally, $\mathsf{dist}(i) = \min \{ x \mid x \geq 0 \mbox{ s.t. } \LF_{x}(i) \in \startset \}$.
The following lemma holds. 
\begin{lemma}\label{lem:same_lcp_values_count}
$\LF_{\mathsf{dist}(i)}(i) \neq \LF_{\mathsf{dist}(j)}(j)$ holds for two distinct integers $i, j \in \mathcal{P}_{t}$, 
where $t \geq 0$ is an integer. 
\end{lemma}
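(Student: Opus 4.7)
The plan is to exploit two simple facts stated earlier: the identity $\LCP[\LF(k)] = 1 + \LCP[k]$ whenever $k \not\in \startset$, and the bijectivity of $\LF$ on $\{1, 2, \ldots, n\}$. Intuitively, as long as iterating $\LF$ avoids run starts, the LCP value goes up by exactly one per step; so starting from $i \in \mathcal{P}_t$ and iterating $\LF$ exactly $\mathsf{dist}(i)$ times lands on a position of LCP value $t + \mathsf{dist}(i)$. From this, collisions between the trajectories of distinct $i, j \in \mathcal{P}_t$ will be ruled out by a short bijection argument.

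First I would establish the auxiliary claim that $\LCP[\LF_k(i)] = t + k$ for all $0 \leq k \leq \mathsf{dist}(i)$ and every $i \in \mathcal{P}_t$, by induction on $k$. The base case is immediate from $i \in \mathcal{P}_t$. In the inductive step, the condition $k - 1 < \mathsf{dist}(i)$ combined with the minimality in the definition of $\mathsf{dist}$ forces $\LF_{k-1}(i) \not\in \startset$, so the cited LCP identity applies and increments the value by one. The same claim of course holds with $j$ in place of $i$.

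Next I would argue by contradiction. Suppose $\LF_{d_i}(i) = \LF_{d_j}(j)$ where $d_i = \mathsf{dist}(i)$ and $d_j = \mathsf{dist}(j)$, and assume without loss of generality $d_i \leq d_j$. Since $\LF$ is a bijection on $\{1, \ldots, n\}$, so is its $d_i$-fold iterate; cancelling $\LF_{d_i}$ from both sides yields $i = \LF_{d_j - d_i}(j)$. If $d_i = d_j$ this gives $i = j$, contradicting $i \neq j$. Otherwise $d_j - d_i \geq 1$, and the auxiliary claim applied to $j$ with $k = d_j - d_i \leq d_j$ gives $\LCP[i] = t + (d_j - d_i) > t$, contradicting $i \in \mathcal{P}_t$.

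The main obstacle is the bookkeeping around $\mathsf{dist}$: the LCP-by-one identity can be iterated up to and including the $\mathsf{dist}(i)$-th step (producing the value $t + \mathsf{dist}(i)$ at a run start) but not beyond, because $\LF_{\mathsf{dist}(i)}(i)$ itself lies in $\startset$ and a further $\LF$-step may break the $+1$ pattern. Once the valid range of the induction is kept straight, the rest is a direct chase through the bijection.
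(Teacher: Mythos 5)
Your proof is correct and follows essentially the same route as the paper's: both rest on the telescoping identity $\LCP[\LF_{k}(i)] = \LCP[i] + k$ for $k \leq \mathsf{dist}(i)$ (justified by the minimality of $\mathsf{dist}$ and the run-interior LF/LCP formula) combined with the injectivity of $\LF$. The only cosmetic difference is that the paper first deduces $\mathsf{dist}(i)=\mathsf{dist}(j)$ from the LCP equations and then cancels, whereas you cancel first and rule out $\mathsf{dist}(i)\neq\mathsf{dist}(j)$ via the intermediate LCP value; both are sound.
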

\begin{proof}
We show that $\LCP[i] = \LCP[\LF_{\mathsf{dist}(i)}(i)] - \mathsf{dist}(i)$ holds for an integer $i \in \{1, 2, \ldots, n \}$. 
Let $t$ be an integer in $[0, \mathsf{dist}(i)-1]$. 
Since $\LF_{t}(i) \not \in \startset$, 
$\LCP[\LF_{t}(i)] = \LCP[\LF_{t+1}(i)] - 1$ holds by LF formula. 
The LF formula produces 
$\mathsf{dist}(i)$ equations 
$\LCP[\LF_{0}(i)] = \LCP[\LF_{1}(i)] - 1$, $\LCP[\LF_{1}(i)] = \LCP[\LF_{2}(i)] - 1$, $\ldots$, 
$\LCP[\LF_{\mathsf{dist}(i)-1}(i)] = \LCP[\LF_{\mathsf{dist}(i)}(i)]$ - 1. 
Hence, $\LCP[i] = \LCP[\LF_{\mathsf{dist}(i)}(i)] - \mathsf{dist}(i)$ holds by the $\mathsf{dist}(i)$ equations. 
 
Next, we prove Lemma~\ref{lem:same_lcp_values_count}. 
The two integers $i$ and $j$ must be the same if $\LF_{\mathsf{dist}(i)}(i) = \LF_{\mathsf{dist}(j)}(j)$ 
because $\mathsf{dist}(i) = \mathsf{dist}(j)$ holds by three equations $\LCP[i] = \LCP[j]$, $\LCP[i] = \LCP[\LF_{\mathsf{dist}(i)}(i)] - \mathsf{dist}(i)$, 
and $\LCP[j] = \LCP[\LF_{\mathsf{dist}(j)}(j)] - \mathsf{dist}(j)$. 
The equation $i = j$ contradicts the fact that $i \neq j$. 
Hence, $\LF_{\mathsf{dist}(i)}(i) \neq \LF_{\mathsf{dist}(j)}(j)$ holds. 
\end{proof}
The function $\LF_{\mathsf{dist}(i)}(i)$ maps the integers in $\mathcal{P}_{t}$ into distinct integers in set $\startset$ 
by Lemma~\ref{lem:same_lcp_values_count}. The mapping indicates that  $|\mathcal{P}_{t}| \leq |\startset| = r$ holds for any integer $t$. 
In Figure~\ref{fig:analysis}, the right figure represents the mapping between $\mathcal{P}_{1} = \{ 3, 4 \}$ and $\startset = \{ 1, 2, 5, 6, 7 \}$ on a BWT. 
In this example, $\LF_{1}(3) = 9 \not \in \startset$, $\LF_{2}(3) = 5 \in \startset$, $\LF_{1}(4) = 10 \not \in \startset $, and $\LF_{2}(4) = 6 \in \startset$. 
Hence, $\LF_{\mathsf{dist}(i)}(i)$ maps the two positions $3$ and $4$ in $\mathcal{P}_{1}$ into the two positions $5$ and $6$ in $\startset$, respectively, 
which indicates that $|\mathcal{P}_{1}| \leq |\startset|$ holds.

Finally, we obtain $\max \{ |\mathcal{K}_{0}|, |\mathcal{K}_{1}|, \ldots, |\mathcal{K}_{n}| \} \leq 2r$ by 
$|\mathcal{K}_{t}| \leq 2|\mathcal{P}_{t}|$ and $|\mathcal{P}_{t}| \leq r$. 
Hence, the working space of our traversal algorithm is $O( (r + \sigma)w )$ bits. 
We obtain the following theorem using $\sigma \leq r$. 
\begin{theorem}\label{theo:enumeration_lcp_intervals}
We can output $n$ sets $\{ \repr(P) \mid P \in \mathcal{L}_{0} \}$, $\{ \repr(P) \mid P \in \mathcal{L}_{1} \}$, $\ldots$, 
$\{ \repr(P) \mid P \in \mathcal{L}_{n-1} \}$ in left-to-right order 
in $O(n \log \log_{w} (n/r))$ time and $O(rw)$ bits of working space by processing the RLBWT of $T$. 
\end{theorem}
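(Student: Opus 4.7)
The plan is to combine the preprocessing of Section~\ref{sec:data_structures} with the breadth-first traversal of Section~\ref{sec:lcp-interval_algorithm} and carry out the analysis sketched in Section~\ref{sec:analysis}. First I would build the six data structures of Section~\ref{sec:data_structures} directly from the RLBWT, invoking the construction supplied by Appendix~A to obtain them in $O(n \log \log_w (n/r))$ time and $O(rw)$ bits. These structures support $\LF$-evaluation in $O(\log \log_w (n/r))$ time via the identity $\LF(i) = D_{\LF}[x] + (i - \ell(x))$ with $x = \rank(\startset, i)$, and, by Lemma~\ref{lem:range_query}, a range distinct query on $L$ returning $k$ tuples in $O((k+1)\log \log_w (n/r))$ time.

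Next I would run the BFS: initialize with $\{\repr(\varepsilon)\}$, and at each step $t \geq 1$ apply Lemma~\ref{lem:weiner_algorithm} to every element of the current frontier $\{\repr(P) \mid P \in \mathcal{L}_{t-1}\}$, emit the resulting frontier $\{\repr(P) \mid P \in \mathcal{L}_{t}\}$, and discard the old one. Correctness is immediate from the fact that every explicit node at depth $t \geq 1$ is the Weiner-link child of some explicit node at depth $t-1$, so $\{\repr(P) \mid P \in \mathcal{L}_t\} = \bigcup_{P \in \mathcal{L}_{t-1}} \{\repr(cP) \mid cP \in \Weiner(P) \cap \mathcal{L}_t\}$.

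For the running time, Lemma~\ref{lem:weiner_algorithm} shows that step $t$ costs $O(H_{t-1} \log \log_w (n/r))$, where $H_t$ counts the Weiner links leaving the children of depth-$t$ explicit nodes. Summing and invoking the bound $\sum_t H_t = O(n)$ from Observation~1 of~\cite{DBLP:journals/talg/BelazzouguiCKM20} yields the claimed $O(n \log \log_w (n/r))$ total time.

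The main obstacle is the working-space bound, since at step $t$ the algorithm simultaneously holds the rich representations for $\mathcal{K}_{t-1} \cup \mathcal{K}_t$, and I must prove $|\mathcal{K}_t| = O(r)$ uniformly in $t$. I would route this through two auxiliary sets. Let $\mathcal{K}'_t$ be the children of depth-$t$ explicit nodes excluding the last child of each such node; because every explicit node has at least two children, $|\mathcal{K}_t| \leq 2|\mathcal{K}'_t|$. Let $\mathcal{P}_t = \{ i \in \{2,\ldots,n\} : \LCP[i] = t \}$; the correspondence sending an adjacent pair $(u_{Pc}, u_{Pc'})$ with $Pc \prec Pc'$ to the left boundary of $\stinterval(Pc')$ is a bijection $\mathcal{K}'_t \to \mathcal{P}_t$, giving $|\mathcal{K}'_t| = |\mathcal{P}_t|$. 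It then suffices to bound $|\mathcal{P}_t| \leq r$, which I would obtain through Lemma~\ref{lem:same_lcp_values_count}: iterating $\LCP[\LF(i)] = \LCP[i]+1$ along the $\startset$-free portion of the orbit of $i$ forces $\mathsf{dist}(i) = \mathsf{dist}(j)$ whenever two orbits collide on elements of $\mathcal{P}_t$, so $i \mapsto \LF_{\mathsf{dist}(i)}(i)$ is an injection $\mathcal{P}_t \hookrightarrow \startset$. Chaining these three inequalities yields $|\mathcal{K}_t| \leq 2r$, so the total working space is $O((r+\sigma)w) = O(rw)$ bits (using $\sigma \leq r$), completing the proof.
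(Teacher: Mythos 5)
Your proposal is correct and follows essentially the same route as the paper: build the six data structures of Section~\ref{sec:data_structures} via Appendix~A, run the level-by-level BFS using Lemma~\ref{lem:weiner_algorithm}, bound the time by $\sum_t H_t = O(n)$, and bound the space through the chain $|\mathcal{K}_t| \leq 2|\mathcal{K}'_t| = 2|\mathcal{P}_t| \leq 2r$ with the injection $i \mapsto \LF_{\mathsf{dist}(i)}(i)$ of Lemma~\ref{lem:same_lcp_values_count}. No substantive differences from the paper's argument.
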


\section{Enumeration of characteristic substrings in \texorpdfstring{$O(rw)$}{O(rw)} bits of space}\label{sec:applications}
In this section, we present r-enum, which enumerates maximal repeats, minimal unique substrings, and minimal absent words using RLBWT. 
While the enumeration algorithm proposed by Belazzougui and Cunial~\cite{DBLP:conf/spire/BelazzouguiC15} finds nodes corresponding to characteristic substrings while traversing the Weiner-link tree of $T$, 
r-enum uses our breadth-first traversal algorithm presented in Section~\ref{sec:traverse} instead of their traversal algorithm. 
The next theorem holds by assuming $\sigma \leq r$. 

\begin{theorem}\label{theo:main_result}
R-enum can enumerate (i) maximal repeats, (ii) minimal unique substrings, and (iii) minimal absent words for $T$ 
in $O(n \log \log_{w} (n/r))$, $O(n \log \log_{w} (n/r))$, and $O(n \log \log_{w} (n/r) + occ)$ time, respectively, by processing the RLBWT of $T$ with $O(rw)$ bits of working space, 
where $occ$ is the number of minimal absent words for $T$, and $occ = O(\sigma n)$ holds~\cite{DBLP:journals/ipl/CrochemoreMR98}. 
Here, r-enum outputs rich representation $\repr(P)$, pair $(\stinterval(P'), |P'|)$, and 3-tuple $(\stinterval(P''), |P''|, c)$ 
for a maximal repeat $P$, minimal unique substring $P'$, and minimal absent word $P''c$, respectively, 
where $P, P', P''$ are substrings of $T$, and $c$ is a character. 
\end{theorem}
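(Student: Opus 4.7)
The plan is to drive the enumeration by the Weiner-link tree traversal of Theorem~\ref{theo:enumeration_lcp_intervals}, which visits every explicit suffix-tree node $u_P$ of $T$ in breadth-first order and, at each visit, supplies $\repr(P)$ together with the rich representations of the Weiner-link children of $u_P$ returned by Lemma~\ref{lem:weiner_algorithm}. On top of this traversal I would plug in the local characterizations used by Belazzougui and Cunial~\cite{DBLP:conf/spire/BelazzouguiC15}: first, $P$ is a maximal repeat iff $u_P$ is visited (right-maximality is automatic) and $|\Weiner(P)| \geq 2$ (left-maximality), both obtainable directly from the range-distinct query executed inside Lemma~\ref{lem:weiner_algorithm}; second, a suffix-tree child $(c,b,e) \in \repr(P)$ with $b=e$ contributes the minimal unique substring $Pc$ whenever the suffix-link target $u_{P[2..|P|]c}$ has interval size at least $2$; and third, every pair $(c,c')$ with $cP \in \Weiner(P)$, $Pc' \in \children(P)$, and $cPc' \notin \Weiner(Pc')$ yields a minimal absent word $cPc'$, a test that reduces to comparing the sets of distinct characters returned by the range-distinct queries on $\stinterval(P)$ and $\stinterval(Pc')$, both already computed while expanding $u_P$ and its children.

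The complexity analysis reduces almost entirely to that of Theorem~\ref{theo:enumeration_lcp_intervals}. The traversal executes $O(n)$ Weiner-link expansions of total cost $O(n \log\log_w(n/r))$ time within $O(rw)$ bits; the local tests for maximal repeats and minimal unique substrings add only $O(1)$ work per element of $\Weiner(P)$ and $\children(P)$, subsumed by the same $O(n)$ global count. For minimal absent words, each emitted triple is produced in $O(1)$ amortized time, explaining the $+\,occ$ term, while non-emitted candidate pairs are dismissed by $O(1)$ lookups into a temporary bitmap of size $O(\sigma w) = O(rw)$ bits flagging which characters appear in $\stinterval(Pc')$. Each output (rich representation, pair, or triple) is a constant number of $O(\log n)$-bit values, so the overall working space remains $O(rw)$ bits as claimed.

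The hard part will be pinning down a truly local witness for minimal uniqueness, since the condition $|\Occ(T, P[2..|P|]c)| \geq 2$ formally concerns the suffix-link target of $u_{Pc}$, which lies one level shallower in the suffix tree than $u_{Pc}$ itself. I would resolve this by emitting the unique child $Pc$ not at the moment $u_P$ is visited, but at the moment $u_{P[2..|P|]}$ is expanded by the Weiner-link traversal, where the interval of $u_{P[2..|P|]c}$ is directly available as a suffix-tree child entry in $\repr(P[2..|P|])$; cross-referencing each $Pc$ with its Weiner-link origin requires only $O(1)$ extra bookkeeping per unique child, and by the bound $|\mathcal{K}_t| \leq 2r$ established in Section~\ref{sec:analysis} this bookkeeping still fits in $O(rw)$ bits. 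Once this is handled, the three claims of the theorem follow by combining the time bound of Theorem~\ref{theo:enumeration_lcp_intervals} with the per-element work above and the hypothesis $\sigma \leq r$.
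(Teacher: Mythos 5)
Your overall plan coincides with the paper's: drive everything by the breadth-first Weiner-link traversal of Theorem~\ref{theo:enumeration_lcp_intervals} and apply purely local tests at each visited node. In particular, your resolution of the minimal-unique-substring difficulty --- emitting the candidate not when its own node appears but when the node of the string obtained by dropping its first character is expanded, where the needed interval is available as a suffix-tree child entry of the current rich representation --- is exactly what the paper does in Appendix~B with the array $\rightarr_{P}$ storing $|\Occ(T,Pc)|$, and your minimal-absent-word test (pairs $(c,c')$ with $cP\in\Weiner(P)$, $Pc'\in\children(P)$, $cPc'\notin\Weiner(Pc')$) together with the observation that every examined pair either yields an output or is charged to a Weiner link is exactly Appendix~C; the time and space accounting also matches.

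The one genuine flaw is your left-maximality criterion for maximal repeats. You test $|\Weiner(P)|\geq 2$, but $\Weiner(P)$ by definition contains only strings $cP$ that are substrings of $T$, and $\$P$ is never a substring of $T$; so when $P$ is a prefix of $T$ and all of its other occurrences are preceded by one and the same character $c$, you get $|\Weiner(P)|=1$ even though $P$ is left-maximal (the occurrence ``preceded by $\$$'' already forces $|\Occ(T,cP)|<|\Occ(T,P)|$, and every other left extension is absent). Concretely, for $T=abab\$$ the explicit node $u_{ab}$ has $\stinterval(ab)=[2,3]$, $L[2..3]=b\$$, and $\Weiner(ab)=\{bab\}$, yet $ab$ is a maximal repeat that your test would discard. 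The correct local test is that $L[b..e]$ contains at least two distinct symbols, $\$$ included; the paper implements this as $\rank(\startset,b)\neq\rank(\startset,e)$ (Corollary~\ref{cor:maximal_repeat}), i.e., the interval crosses a run boundary, at cost $O(\log\log_{w}(n/r))$ using the already-stored $R_{\rank}(\startset)$. Equivalently, you could count the tuples returned by the range distinct query before the $c\neq\$$ filter of Lemma~\ref{lem:weiner} is applied. With that one-line fix your argument goes through.
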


We prove Theorem~\ref{theo:main_result}(i) for maximal repeats. 
R-enum leverages the following relation between the maximal repeats and nodes in the Weiner-link tree. 
\begin{corollary}\label{cor:maximal_repeat}
A substring $P$ of $T$ is a maximal repeat if and only if 
$P$ satisfies two conditions: 
(i) node $u_{P}$ is explicit~(i.e., $P \in \mathcal{L}_{|P|}$), 
and (ii) $\rank(\startset, b) \neq \rank(\startset, e)$, 
where $\stinterval(P) = [b, e]$. 
\end{corollary}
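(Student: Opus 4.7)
The plan is to split ``$P$ is a maximal repeat'' into three subconditions --- $|\Occ(T,P)| \geq 2$, right-maximality, and left-maximality --- and then match condition~(i) to (count~$\geq 2$)$+$(right-maximality) and condition~(ii) to (count~$\geq 2$)$+$(left-maximality). A useful preliminary fact is that, since $\$$ occurs uniquely at position $n$, any $P$ with $|\Occ(T,P)| \geq 2$ cannot contain $\$$ and in particular is not a suffix of $T$, so the decomposition $|\Occ(T,P)| = \sum_{c \in \Sigma} |\Occ(T,Pc)|$ holds.

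For condition~(i), I would use the above decomposition to rewrite right-maximality of $P$ (assuming $|\Occ(T,P)|\geq 2$) as the requirement that no single $c\in\Sigma$ achieves $|\Occ(T,Pc)| = |\Occ(T,P)|$, which is equivalent to $|\children(P)| \geq 2$, i.e., to $u_P$ being explicit. The reverse direction is immediate: explicitness forces at least two distinct right-extensions, yielding both $|\Occ(T,P)| \geq 2$ and right-maximality.

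For condition~(ii), I would invoke the standard BWT property that $L[i]$ is the character preceding the $i$-th suffix in sorted order (with $L[i]=\$$ exactly when $\SA[i]=1$). Thus $L[b..e]$ is the multiset of characters preceding the occurrences of $P$, with one $\$$ slot iff $P$ is a prefix of $T$; because $|\Occ(T,\$P)|=0$ whenever $|P|\geq 1$, the $\$$ slot causes no trouble and the condition ``$P$ is left-maximal'' becomes equivalent to ``$L[b..e]$ contains at least two distinct characters.'' Since each run of $L$ is a maximal constant substring, having only one distinct character in $L[b..e]$ is equivalent to $b$ and $e$ belonging to the same run, which by the very definition of $\rank(\startset,\cdot)$ as the ``run index of $i$'' is equivalent to $\rank(\startset,b)=\rank(\startset,e)$. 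Negating, condition~(ii) forces $b<e$, so $|\Occ(T,P)|=e-b+1\geq 2$ comes along for free.

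Combining the two equivalences yields the corollary. The main subtlety will be making sure the $\$$ slot in $L[b..e]$ (when $P$ is a prefix of $T$) contributes correctly to the left-maximality criterion even though $\$P \notin \substr(T)$, and confirming that the ``$P$ is a suffix of $T$'' corner case is excluded automatically by the hypothesis $|\Occ(T,P)|\geq 2$, since such a $P$ would have to contain the uniquely-occurring $\$$ and hence occur exactly once.
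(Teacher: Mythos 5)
Your proposal is correct and follows essentially the same route as the paper's proof: the paper likewise equates condition~(i) with ``$|\Occ(T,P)|\geq 2$ and right-maximality'' via explicitness of $u_P$, and condition~(ii) with left-maximality via ``$L[b..e]$ contains at least two distinct characters,'' which holds iff $b$ and $e$ lie in different runs, i.e., $\rank(\startset,b)\neq\rank(\startset,e)$. The only difference is that you spell out the $\$$ corner cases and the decomposition $|\Occ(T,P)|=\sum_{c}|\Occ(T,Pc)|$, which the paper's two-sentence proof leaves implicit.
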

\begin{proof}
$|\Occ(T, P)| \geq 2$ and $|\Occ(T, P)| > |\Occ(T, Pc)|$ hold for any character $c \in \Sigma$ if and only if node $u_{P}$ is explicit. 
From the definition of BWT, 
$|\Occ(T, P)| > |\Occ(T, cP)|$ holds for any character $c \in \Sigma$ 
if and only if $L[b..e]$ contains at least two distinct characters, i.e., $\rank(\startset, b) \neq \rank(\startset, e)$ holds. 
Hence, Corollary~\ref{cor:maximal_repeat} holds. 
\end{proof}

R-enum traverses the Weiner-link tree of $T$, 
and it verifies whether each explicit node $u_{P}$ represents a maximal repeat by using Corollary~\ref{cor:maximal_repeat}, i.e., 
it verifies $\rank(\startset, b) \neq \rank(\startset, e)$ or $\rank(\startset, b) = \rank(\startset, e)$ by using the two rank queries on $\startset$. 
We output its rich representation $\repr(P)$ if $P$ is a maximal repeat. 
The rich representation $\repr(P)$ for $u_{P}$ stores $[b,e]$, 
and our breadth-first traversal algorithm stores the data structure $R_{\rank}(\startset)$ for rank query on $\startset$. 
Hence, we obtain Theorem~\ref{theo:main_result}(i). 

Similarly, r-enum can also find the nodes corresponding to minimal unique substrings and minimal absent words 
by using their properties while traversing the Weiner-link tree. 
See Appendixes B and C for the proofs of Theorem~\ref{theo:main_result}(ii) and (iii), respectively.

\section{Modified enumeration algorithm for original characteristic substrings and their occurrences}\label{sec:option}
Let $\outputrepr(P)$ be the element representing a characteristic substring $P$ outputted by r-enum. 
In this section, we slightly modify r-enum and provide three additional data structures, $R_{\mathsf{str}}$, $R_{\mathsf{occ}}$, and $R_{\eRD}$, 
to recover the original string $P$ and its occurrences in $T$ from the element $\outputrepr(P)$. 
The three data structures $R_{\mathsf{str}}$, $R_{\mathsf{occ}}$ and $R_{\eRD}$ require $O(rw)$ bits of space 
and support \emph{extract, extract-sa}, and \emph{extended range distinct queries}, respectively.
An extract query returns string $P$ for a given pair $(\stinterval(P), |P|)$. 
An extract-sa query returns all the occurrences of $P$ in $T$~(i.e., $\SA[b..e]$) for a given pair $(\stinterval(P), \SA[b])$, 
where $\stinterval(P) = [b,e]$. 
An extended range distinct query $\eRD(L, b, e, \SA[b])$ returns 4-tuple $(c, p_{c}, q_{c}, \SA[p_{c}])$ for 
each output $(c, p_{c}, q_{c}) \in \RD(L, b, e)$~(i.e., $\eRD(L, b, e, \SA[b]) = \{ (c, p_{c}, q_{c}, \SA[p_{c}]) \mid (c, p_{c}, q_{c}) \in \RD(L, b, e) \}$). 
We omit the detailed description of the three data structures because 
each data structure supports the queries by using the well-known properties of RLBWT. 
Formally, let $k = |\eRD(L, b, e, \SA[b])|$. 
Then, the following lemma holds. 
\begin{lemma}\label{lem:additional_data_structures}
The three data structures $R_{\mathsf{str}}$, $R_{\mathsf{occ}}$, and $R_{\eRD}$ of $O(rw)$ bits of space can support 
extract, extract-sa, and extended range distinct queries in $O(|P| \log \log_{w} (n/r))$, $O((e-b+1) \log \log_{w} (n/r))$, and $O((k+1) \log \log_{w} (n/r))$ time, respectively. 
We can construct the three data structures in $O(n \log \log_{w} (n/r))$ time and $O(rw)$ bits of working space by processing the RLBWT of $T$. 
\end{lemma}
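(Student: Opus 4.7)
The plan is to realize each of the three queries by an r-index-style construction sitting on top of the RLBWT and the machinery already built in Section~\ref{sec:data_structures}. The common ingredient is a sampling of the suffix array at run boundaries of $L$: store $\SA[\ell(i)]$ for every $i \in [1,r]$, together with auxiliary rank/select structures on $\startset$ and on the sorted image $\{\LF(\ell(i))\}_{i=1}^{r}$. All of these occupy $O(r)$ words, and they can be produced by a single pass through the RLBWT using the $\LF$ implementation already available, which runs in $O(n \log \log_{w}(n/r))$ time within $O(rw)$ bits; I will treat the construction itself as routine.

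For the extract query on $(\stinterval(P), |P|) = ([b,e], |P|)$, I plan to recover the characters of $P$ from left to right using the identity $P[k] = F[\FL^{k-1}(b)]$, which follows from $\SA[\FL(i)] = \SA[i]+1$. Character access on $F$ is immediate from the $C$-array, and I will implement $\FL$ in $O(\log\log_{w}(n/r))$ time symmetrically to the $\LF$ implementation of Section~\ref{sec:data_structures}: locate the run of $F$ containing the queried position by a rank query on the sorted sequence of $F$-run starts, then translate back to the corresponding index in $L$. Iterating this $|P|$ times yields the claimed time.

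For the extract-sa query on $(\stinterval(P), \SA[b])$, I would adopt the toehold strategy of the r-index: with $\SA[b]$ in hand, compute successively $\SA[b+1], \SA[b+2], \ldots, \SA[e]$ by iterating an inverse-$\phi$ function satisfying $\phi^{-1}(\SA[i]) = \SA[i+1]$. Each evaluation of $\phi^{-1}$ costs $O(\log\log_{w}(n/r))$ time by combining the run-boundary SA samples with predecessor queries on $\startset$, so the total cost over the interval is $O((e-b+1)\log\log_{w}(n/r))$ as required.

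For the extended range distinct query I would observe that, once $\RD(L,b,e)$ has been computed via Lemma~\ref{lem:range_query}, only the extra value $\SA[p_c]$ per output tuple remains to be supplied. Two cases suffice: if $L[b]=c$ then $p_c=b$ and $\SA[p_c]=\SA[b]$ is already given, while otherwise $p_c$ must be the starting position of some run of character $c$ strictly to the right of $b$, so $p_c \in \startset$ and the sample $\SA[p_c]$ is stored directly. Each output tuple thus costs $O(\log\log_{w}(n/r))$, yielding the claimed $O((k+1)\log\log_{w}(n/r))$ bound. The step I expect to require the most care is the $\FL$ construction behind the extract query, since the other two queries reduce cleanly to run-boundary SA samples and standard $\phi$-style navigation, whereas a correct $\FL$ demands a symmetric rank/select structure on $F$-run starts that must be built consistently with the existing $\LF$ machinery and be compressed into $O(rw)$ bits.
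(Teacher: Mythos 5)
Your proposal is correct and follows essentially the same route as the paper's Appendix~D: the extract query via iterating the inverse LF (your $\FL$) with a rank structure on the LF-images of run starts, the extract-sa query via $\phi$-iteration from the toehold $\SA[b]$ using run-boundary SA samples, and the extended range distinct query via the toehold case analysis ($p_c = b$ or $p_c \in \startset$) on top of Lemma~\ref{lem:range_query}. The only differences are cosmetic (you read characters from $F$ via the $C$-array where the paper reads them from $L$ at the $\LF^{-1}$-images, and you defer the construction details that the paper spells out via radix sort and the $\phi$-structure of Gagie et al.).
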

\begin{proof}
See Appendix D. 
\end{proof}

We modify r-enum as follows. 
The modified r-enum outputs pair $(\repr(P), \SA[b])$, 3-tuple $(\stinterval(P'), |P'|, \SA[b'])$, and 4-tuple $(\stinterval(P''), |P''|, c, \SA[b''])$ 
instead of rich representation $\repr(P)$, pair $(\stinterval(P'), |P'|)$, and 3-tuple $(\stinterval(P''), |P''|, c)$, respectively. 
Here, (i) $P, P', P''c$ are a maximal repeat, minimal unique substring, and minimal absent word, respectively, 
and (ii) $b, b', b''$ are the left boundaries of $\stinterval(P)$, $\stinterval(P')$, and $\stinterval(P''c)$, respectively. 
We replace each range distinct query $\RD(L, b, e)$ used by r-enum with the corresponding extended range distinct query 
$\eRD(L, b, e, \SA[b])$ to compute the sa-values $\SA[b]$, $\SA[b']$, and $\SA[b'']$. 
See Appendix E for a detailed description of the modified r-enum. 
Formally, the following theorem holds. 
\begin{theorem}\label{theo:main_result2}
R-enum can also output the sa-values $\SA[b]$, $\SA[b']$, and $\SA[b'']$ for each maximal repeat $P$, minimal unique substring $P'$, 
and minimal absent word $P''c$, respectively, by using an additional data structure of $O(rw)$ bits of space. 
Here, $b, b', b''$ are the left boundaries of $\stinterval(P)$, $\stinterval(P')$, and $\stinterval(P''c)$, respectively, 
and the modification does not increase the running time. 
\end{theorem}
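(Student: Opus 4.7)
The plan is to augment every interval stored in a rich representation with the sa-value of its left boundary, and to replace every range distinct query used inside the traversal of Section~\ref{sec:traverse} by an extended range distinct query. By Lemma~\ref{lem:additional_data_structures}, $R_{\eRD}$ answers these queries in the same asymptotic time as the original range distinct queries and the additional data structures occupy $O(rw)$ bits in total, so both the running time and the working space of r-enum are preserved after the modification.

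The key identity driving the propagation is $\SA[\LF(i)] = \SA[i] - 1$, valid for every $i$ except the unique position with $\SA[i] = 1$. Suppose inductively that the augmented representation of an explicit node $u_P$ stores, together with $\stinterval(P) = [b,e]$, the value $\SA[b]$, and for each child $Pc' \in \children(P)$ the value $\SA[b_{c'}]$ where $[b_{c'}, e_{c'}] = \stinterval(Pc')$. Running $\eRD(L, b, e, \SA[b])$ will yield $\SA[p_c]$ for every $cP \in \Weiner(P)$, so the sa-value of the left boundary of $\stinterval(cP)$, namely $\SA[\LF(p_c)] = \SA[p_c] - 1$, is obtained for free. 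Running $\eRD(L, b_{c'}, e_{c'}, \SA[b_{c'}])$ on each child interval of $u_P$ then supplies the sa-values at the left boundaries of the children of every Weiner-link child $u_{cP}$ in the next layer. Starting from $\repr(\varepsilon)$, whose augmentation consists of $\SA[1]$ together with the sa-values produced by one extended query on $[1,n]$, this invariant propagates through the whole breadth-first traversal.

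To emit the required sa-value for each characteristic substring, I will piggy-back on the identification rules of Theorem~\ref{theo:main_result}. For a maximal repeat $P$, the value $\SA[b]$ is read directly from the augmented representation of $u_P$ at the moment Corollary~\ref{cor:maximal_repeat} flags it. For a minimal unique substring $P'$, the identification in Appendix~B locates $P'$ either as an explicit node or as a child during the Weiner-link traversal, and in both cases its left boundary coincides with an entry already carrying its sa-value. For a minimal absent word $P''c$, the required $\SA[b'']$ equals $\SA[p_c] - 1$ where $p_c$ is returned by the extended range distinct query applied to $\stinterval(P'')$ that already feeds the decision procedure in Appendix~C.

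The main obstacle will be the minimal absent word case, because $P''c$ does not actually occur in $T$: I will need to fix the convention that $\stinterval(P''c)$ denotes the suffix-array position at which $P''c$ would be inserted lexicographically, and then verify that this position equals $\LF(p_c)$ for the character $c$ and first occurrence $p_c$ used by the Appendix~C identification. Once this bookkeeping is settled, correctness follows from the identity $\SA[\LF(i)] = \SA[i] - 1$ and the invariant above, the running time is preserved by Lemma~\ref{lem:additional_data_structures}, and the bound of Section~\ref{sec:analysis} is unaffected since the augmentation adds only $O(w)$ bits per stored interval.
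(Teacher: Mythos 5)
Your overall mechanism is the same as the paper's (Appendix~E): augment each interval in a rich representation with the sa-value of its left boundary (the paper calls this an \emph{extended rich representation} $\erepr(P)$), propagate these values through the breadth-first traversal by replacing each $\RD$ query with the corresponding $\eRD$ query and using $\SA[\LF(i)]=\SA[i]-1$, and read the sa-values off at the moment each characteristic substring is identified. Your treatment of maximal repeats and minimal unique substrings matches the paper's, and the propagation invariant you state (including getting the left boundary of $\stinterval(cP)$ from the query on $\stinterval(P)$, and the children's left boundaries from the queries on the child intervals) is exactly the computation in the paper's Lemma on extended Weiner-link computation.

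The minimal absent word case, however, contains a genuine gap. You treat $\stinterval(P''c)$ literally, propose to define it as the lexicographic insertion position of the absent word, and claim $\SA[b'']=\SA[p_c]-1$ for a $p_c$ ``returned by the extended range distinct query applied to $\stinterval(P'')$.'' This cannot work as written: a range distinct query on $L[\stinterval(P'')]$ enumerates the \emph{left} extensions of $P''$ (the characters preceding its occurrences), whereas $c$ is a \emph{right} extension, and the defining property of the minimal absent word $P''c$ is precisely that $c$ never follows $P''$ in $T$ --- so the occurrence $p_c$ you would need to apply $\LF$ to does not exist in the relevant interval. The paper avoids this entirely: consistently with Theorem~\ref{theo:main_result}, which outputs $(\stinterval(P''),|P''|,c)$ for an absent word $P''c$, the sa-value attached to an absent word is the one at the left boundary of the interval of the \emph{occurring} prefix (in the appendix's notation, the left boundary of $\stinterval(cP)$ for the absent word $cPc'$), and that value is already a field of $\erepr(cP)$, so it is emitted with no extra computation and no insertion-position convention. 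Your ``main obstacle'' is thus an artifact of misreading what must be output; once you adopt the paper's output convention, the absent-word case reduces to the same read-off as the other two cases and your argument closes.
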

\begin{proof}
See Appendix E. 
\end{proof}

We compute each characteristic substring and all the occurrences of the substring in $T$ by applying 
extract and extract-sa queries to the corresponding element outputted by the modified r-enum. 
For example, the outputted pair $(\repr(P), \SA[b])$ for a maximal repeat $P$ contains $\stinterval(P), |P|$, and $\SA[b]$. 
Hence, we can obtain $P$ by applying an extract query to pair $(\stinterval(P), |P|)$. 
Similarly, we can obtain all the occurrences of $P$ in $T$ by applying an extract-sa query to pair $(\stinterval(P), \SA[b])$. 
Note that we do not need to compute the occurrences of minimal absent words in $T$ 
since the words do not occur in $T$. 

\section{Experiments}\label{sec:exp}
\begin{table}[t]
    \footnotesize
    \caption{
    Details of dataset. 
    Table details data size in megabytes (MB), string length~($n$), 
    alphabet size~($\sigma$), number of runs in BWT~($r$), 
    and number of maximal repeats~($m$) for each piece of data.
    }
    \label{table:fileinfo} 

\begin{tabular}[t]{l||r|r|r|r|r}
Data name & Data size [MB] & $n$ &$\sigma$ & $r$ & $m$ \\ \hline
einstein.de.txt & 93  & 92,758,441  & 118 & 101,370  & 79,469  \\
einstein.en.txt & 468  & 467,626,544  & 140 & 290,239  & 352,590  \\ 
world leaders & 47  & 46,968,181  & 90 & 573,487  & 521,255  \\ 
influenza & 155  & 154,808,555  & 16 & 3,022,822  & 7,734,058  \\ 
kernel & 258  & 257,961,616  & 161 & 2,791,368  & 1,786,102  \\ 
cere & 461  & 461,286,644  & 6 & 11,574,641  & 10,006,383  \\ 
coreutils & 205  & 205,281,778  & 237 & 4,684,460  & 2,963,022  \\ 
Escherichia Coli & 113  & 112,689,515  & 16 & 15,044,487  & 12,011,071  \\ 
para & 429  & 429,265,758  & 6 & 15,636,740  & 13,067,128  \\ 
\hline
100genome & 307,705  & 307,705,110,945  & 6 & 36,274,924,494  & 52,172,752,566  \\ 
\end{tabular}    

\end{table}

We demonstrate the effectiveness of our r-enum for enumerating maximal repeats on a benchmark dataset of highly repetitive strings 
in a comparison with the state-of-the-art enumeration algorithms 
of the OT~\cite{DBLP:conf/sdm/OkanoharaT09}, BBO~\cite{DBLP:conf/spire/BellerBO12} and BC~\cite{DBLP:conf/spire/BelazzouguiC15} methods, 
which are reviewed in Section~\ref{sec:intro}.

The OT method enumerates maximal repeats by using the BWT and enhanced suffix array of a given string, where  
the enhanced suffix array consists of suffix and LCP arrays.
The OT method runs in $O(n)$ time and with $O(n \log n)$ bits of working space for $T$. 

The BBO method traverses the Weiner-link tree of a given string $T$ by using Lemma~\ref{lem:weiner_algorithm} and a breadth-first search, and 
it outputs maximal repeats by processing all the nodes in the tree. 
We used the SDSL library~\cite{gbmp2014sea} for an implementation of the Huffman-based wavelet tree to support range distinct queries, 
and we did not implement 
the technique of Beller et al. for storing a queue for suffix-tree intervals in $n + o(n)$ bits. 
Hence, our implementation of the BBO method takes the BWT of $T$ and runs in $O(n \log \sigma)$ time and with $|WT_{\mathbf{huff}}| + O(\max \{ |\mathcal{K}_{0}|, |\mathcal{K}_{1}|, \ldots, |\mathcal{K}_{n}| \} )w)$ bits of working space, where 
(i) $|WT_{\mathbf{huff}}| = O(n \log \sigma)$ is the size of the Huffman-based wavelet tree, and 
(ii) $\mathcal{K}_{0}, \mathcal{K}_{1}, \ldots, \mathcal{K}_{n}$ are introduced in Section~\ref{sec:analysis}. 
The latter term can be bounded by $O(rw)$ bits by 
applying the analysis described in Section~\ref{sec:analysis} to their enumeration algorithm. 

The BC method also traverses the Weiner-link tree by using Lemma~\ref{lem:weiner_algorithm} and a depth-first search. 
The method stores a data structure for range distinct queries and a stack data structure of size $O(\sigma^{2} \log^2 n)$ bits for the depth-first search.
We also used the SDSL library~\cite{gbmp2014sea} for the Huffman-based wavelet tree to support range distinct queries.  
Hence, our implementation of the BC method runs in 
$O(n \log \sigma)$ time and with $|WT_{\mathbf{huff}}| + O(\sigma^{2} \log^{2} n)$ bits of working space. 

We used a benchmark dataset of nine highly repetitive strings in the Pizza \& Chili corpus downloadable from \url{http://pizzachili.dcc.uchile.cl}. 
In addition, we demonstrated the scalability of r-enum by enumerating maximal repeats on 
a huge string of 100 human genomes with 307 gigabytes built from 1,000 human genomes~\cite{1000Genomes}. 
Table~\ref{table:fileinfo} details our dataset. 

We used memory consumption and execution time as evaluation measures for each method. 
Since each method takes the BWT of a string as an input and outputs maximal repeats, 
the execution time consists of two parts: 
(i) the preprocessing time for constructing data structures built from an input BWT, and 
(ii) the enumeration time after the data structures are constructed.
We performed all the experiments on 48-core Intel Xeon Gold 6126 (2.60 GHz) CPU with 2 TB of memory. 


\begin{table}[t]
    \footnotesize
    \caption{Peak memory consumption of each method in mega bytes (MB). 
    }
    \label{table:memory} 
    \center{	
\begin{tabular}[t]{|l||r||r|r|r|r|}
\hline
 &  & \multicolumn{4}{|c|}{Memory (MB)}  \\ \hline
Data name & Data size [MB] & r-enum & BBO & BC & OT \\ \hline\hline
einstein.de.txt & 93  & {\bf 2}  & 100  & 100  & 1,642  \\ 
einstein.en.txt & 468  & {\bf 4}  & 488  & 488  & 8,278  \\ 
world leaders & 47  & {\bf 5}  & 37  & 37  & 832  \\ 
influenza & 155  & {\bf 18}  & 77  & 73  & 2,741  \\ 
kernel & 258  & {\bf 21}  & 297  & 292  & 4,567  \\ 
cere & 461  & {\bf 86}  & 265  & 224  & 8,166  \\ 
coreutils & 205  & {\bf 32}  & 268  & 237  & 3,634  \\ 
Escherichia Coli & 113  & 109  & 116  & {\bf 55}  & 1,995  \\ 
para & 429  & {\bf 116}  & 262  & 204  & 7,599  \\ \hline
\end{tabular}    
}
\end{table}
\begin{table}[t]
    \footnotesize
    \caption{Execution time of each method in seconds (s).}
    \label{table:time} 
    \center{	
\begin{tabular}[t]{|l||r|r|r|r|}
\hline
& \multicolumn{4}{|c|}{Execution time (s)}   \\ \hline
Data name & r-enum & BBO & BC & OT \\ \hline\hline
einstein.de.txt & 172  & 84  & 70  & {\bf 13}   \\
einstein.en.txt & 856  & 487  & 387  & {\bf 80}   \\
world leaders & 97  & 24  & 16  & {\bf 10}   \\ 
influenza & 267  & 69  & 49  & {\bf 30}   \\
kernel & 559  & 281  & 178  & {\bf 57}   \\
cere & 985  & 277  & 186  & {\bf 110}   \\
coreutils & 445  & 285  & 179  & {\bf 42}   \\ 
Escherichia Coli & 253  & 68  & 40  & {\bf 29}   \\ 
para & 961  & 262  & 173  & {\bf 102} \\ \hline
\end{tabular}
}
\end{table}

\subsection{Experimental results on benchmark dataset}

In the experiments on the nine highly repetitive benchmark strings, we ran each method and with a single thread.
Table~\ref{table:memory} shows the peak memory consumption of each method. 
The BBO and BC methods consumed approximately 1.0-2.0 and 0.9-1.5 bytes per byte of input, respectively.
The memory usage of the BC method was no more than that of the BBO method on each of the nine strings. 
The OT method consumed approximately 18 bytes per character, 
which was larger than the memory usage of the BBO method. 
Our r-enum consumed approximately 7-23 bytes per run in BWT. 
The memory usage of r-enum was the smallest on most of the nine strings except for the file Escherichia Coli.
In the best case, the memory usage of r-enum was approximately 127 times less than that of the BC method on einstein.en.txt 
because the ratio $n/r \approx 1611$ and alphabet size $\sigma = 140$ were large.

Table~\ref{table:time} shows the execution time for each method. 
The OT method was the fastest among all the methods, and it took approximately 13-110 seconds.
The execution times of the BBO and BC methods were competitive, 
and each execution of them was finished within 487 seconds on the nine strings.
R-enum was finished in 985 seconds even for the string data (cere) with the longest enumeration time. 
These results show that r-enum can space-efficiently enumerate maximal substrings in a practical amount of time.

\begin{table}[t]
    \footnotesize
    \caption{Execution time in hours and peak memory consumption in mega-bytes (MB) of r-enum on 100genome.}
    \label{table:t100genome} 
    \center{	
\begin{tabular}[t]{r|r|r|r}
Data size [MB] & $n/r$ & Execution time (hours) & Memory (MB) \\ \hline
307,705  & 8.5 & 25 & 319,949  \\ 
\end{tabular}
}
\end{table}

\subsection{Experimental results on 100 human genomes}

We tested r-enum on 100genome, which is a 307-gigabyte string of 100 human genomes. 
For this experiment, we implemented computations of Weiner-links from nodes with the same depth in parallel, and 
we ran the parallelized r-enum with 48 threads.


Table~\ref{table:t100genome} shows the total execution time and peak memory consumption of r-enum on 100genome. 
R-enum consumed approximately 25 hours and 319 gigabytes of memory for enumeration. 
The result demonstrates the scalability and practicality of r-enum for enumerating maximal repeats on a huge string existing in the real-world. 
The execution time was 0.6 seconds per $10^6$ characters. 
The memory consumption was almost the same as the size of the input file. 
It was relatively large compared with the memory consumption of r-enum on the nine benchmark strings because 
the ratio of $n/r \approx 8.4$ for 100genome was small.

\section{Conclusion}
We presented r-enum, which can enumerate maximal repeats, minimal unique substrings, and minimal absent words working in $O(rw)$ bits of working space. 
Experiments using a benchmark dataset of highly repetitive strings showed that r-enum is more space-efficient than the previous methods. 
In addition, we demonstrated the applicability of r-enum to a huge string by performing experiments on a 300-gigabyte string of 100 human genomes.
Our future work is to reduce the running time and working space of r-enum. 

We showed that breadth-first traversal of a Weiner-link tree can be performed in $O(rw)$ bits of working space. 
The previous breadth-first traversal algorithm by the BBO method requires $|RD| + O(n)$ bits of working space, 
where $|RD|$ is the size of a data structure supporting range distinct queries on the BWT of $T$. 
In addition to enumerations of characteristic substrings, traversal algorithms of a Weiner-link tree can be used for constructing three data structures:  
the (i) LCP array, (ii) suffix tree topology, and (iii) the merged BWT of two strings~\cite{DBLP:journals/tcs/PrezzaR21}. 
Constructing these data structures in $O(rw)$ bits of working space by modifying r-enum would be an interesting future work.

We also showed that the data structure $X$ for our traversal algorithm supported two operations: 
(i) a range distinct query in $O(\log \log_{w} (n/r))$ time per output element and 
(ii) computations of LF function in $O(\log \log_{w} (n/r))$ time. 
The result can replace the pair $(|RD|, d)$ presented in Table~\ref{table:result} with pair $(O(rw), O(\log \log_{w} (n/r)))$, which improves
the BBO and BC methods so that they work in $O(rw + n)$ and $O(rw + \sigma^{2} \log^{2} n)$ bits of working space, respectively.  
We think that the working space of the BC method with $X$ is practically smaller than that of r-enum, 
because $\sigma$ is practically smaller than $r$ in many cases. 
This insight indicates that we can enumerate characteristic substrings with a lower memory consumption of $(rw)$ bits
by combining r-enum with the BC method even for a large alphabet size.
Thus, the following method could improve the space efficiency for enumerations:
executing the BC method with $X$ for a small alphabet (i.e., $\sigma$ < $\sqrt{r / \log n}$) and executing r-enum for a large alphabet.



\bibliography{ref}

\clearpage

\section*{Appendix A: Algorithm for constructing data structures in Section~\ref{sec:data_structures}}
Recall that our traversal algorithm uses six data structures: 
(i) the RLBWT of $T$, 
(ii) $\repr(\varepsilon)$, 
(iii) $D_{\LF}$, 
(iv) $R_{\rank}(\startset)$, 
(v) $R_{\RD}(L')$, 
and (vi) an empty array $X$ of size $\sigma$. 

Let $\delta$ be the permutation of $[1,r]$ satisfying either of two conditions for two distinct integers $i, j \in \{1, 2, \ldots, r \}$: 
(i) $L[\ell(\delta[i])] \prec L[\ell(\delta[j])]$ or (ii) $L[\ell(\delta[i])] = L[\ell(\delta[j])]$ and $i < j$. 
Then $D_{\LF}[1] = 1$ and $D_{\LF}[i] = D_{\LF}[i-1] + |L[\ell(\delta[i-1])..\ell(\delta[i-1] + 1) - 1]|$ hold by LF formula. 
We construct the permutation $\delta$ in $O(n)$ time and $O(rw)$ bits of working space by using LSD radix sort. 
After that we construct the array $D_{\LF}$ in $O(r)$ time by using $\delta$. 

Next, 
we construct set $\startset$ and string $L'$ in $O(r)$ time by processing the RLBWT of $T$. 
After that, we construct $R_{\rank}(\startset)$ in 
$O(|\startset| \log \log_{w} (n/r))$ time and $O(|\startset| w)$ bits of working space by processing $\startset$~\cite[Appendix A.1]{10.1145/3375890}. 
Similarly, we construct $R_{\RD}(L')$ in $O(r)$ time and $O(r \log \sigma)$ bits of working space~\cite[Lemma 3.17]{DBLP:journals/talg/BelazzouguiCKM20}. 

Finally, $\repr(\varepsilon)$ consists of 3-tuple $([1,n]$, $\RD(L, 1, n)$, $0)$. 
We compute $\RD(L, 1, n)$ by using Lemma~\ref{lem:range_query}. 
Hence the construction time is $O(n \log \log_{w} (n/r))$ in total, and the working space is $O(rw)$ bits.

\section*{Appendix B: Proof of Theorem~\ref{theo:main_result}(ii)}
For simplicity, we focus on minimal unique substrings with a length of at least $2$.
Every minimal unique substring with a length of at least $2$ is a substring $cPc'$ of $T$,
where $c, c'$ are characters, and $P$ is a string with a length of at least 0. 
R-enum uses an array $\rightarr_{P}$ of size $\sigma$ for detecting a minimal unique substring $cPc'$. 
$\rightarr_{P}[c]$ stores $|\Occ(T, Pc)|$ for each $c \in \Sigma$. 
A substring $cPc'$ of $T$ is a minimal unique substring if and only if $cPc'$ satisfies three conditions: 
(i) $|\Occ(T, cPc')| = 1$, (ii) $|\Occ(T, cP)| \geq 2$, and (iii) $\rightarr_{P}[c] \geq 2$ hold. 
We can verify the three conditions by using $\repr(P)$ and Lemma~\ref{lem:weiner_algorithm}, 
and hence, the following lemma holds. 

\begin{lemma}\label{lem:mus}
Let $\mathcal{M}(P)$ be the set of minimal unique substrings such that the form of each minimal unique substring is $cPc'$, 
where $c, c'$ are characters, and $P$ is a given string. 
We can compute the output by r-enum for the set $\mathcal{M}(P)$~(i.e., $\{ (\stinterval(cPc'), |cPc'|) \mid cPc' \in \mathcal{M}(P) \}$) 
by using (i) $\repr(R)$, (ii) the data structures presented in Section~\ref{sec:data_structures}, 
and (iii) an empty array $X'$ of size $\sigma$. 
The running time and working space are $O(h \log \log_{w} (n/r))$ and $O((\sigma + h')w)$ bits, respectively. 
Here, $h = \sum_{Pc' \in \children(P)} |\Weiner(Pc')|$, and 
$h' = \sum_{cP \in \Weiner(P) \cap \mathcal{L}_{|P|+1}} |\children(cP)|$. 
\end{lemma}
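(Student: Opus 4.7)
The plan is to reduce the computation of $\mathcal{M}(P)$ to a two-pass scan over the Weiner-link children of $P$, using Lemma~\ref{lem:weiner_algorithm} as a black box and letting the scratch array $X'$ play the role of $\rightarr_P$.

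First I would invoke Lemma~\ref{lem:weiner_algorithm} on $\repr(P)$, using the empty array $X$ already provided by the data structures of Section~\ref{sec:data_structures}. This produces $\mathcal{Y} = \{\repr(cP) \mid cP \in \Weiner(P)\}$, partitioned into explicit and implicit rich representations. For each $cP \in \Weiner(P)$ the returned $\repr(cP)$ exposes $\stinterval(cP)$ as well as $\stinterval(cPc')$ for every $cPc' \in \children(cP)$, which is everything we will need in order to test conditions (i)--(iii). Because the queries to $R_{\RD}(L')$ and $D_{\LF}$ each cost $O(\log\log_w(n/r))$, this step runs in $O(h \log\log_w(n/r))$ time and $O((\sigma + h')w)$ bits of space, matching the bound claimed in the lemma.

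Second I would temporarily turn the (initially empty) array $X'$ into $\rightarr_P$ by walking the children listed in $\repr(P)$: for each $Pc' \in \children(P)$ with $\stinterval(Pc') = [b',e']$, assign $X'[c'] := e' - b' + 1 = |\Occ(T, Pc')|$. Note that entries for characters $c'$ with $Pc' \notin \children(P)$ remain $0$, which correctly records $|\Occ(T, Pc')| = 0$. Third, I would iterate through each $cP \in \Weiner(P)$ and each child $cPc'$ recorded in $\repr(cP)$, outputting the pair $(\stinterval(cPc'), |cPc'|)$ precisely when (i) $|\stinterval(cPc')| = 1$, (ii) $|\stinterval(cP)| \geq 2$, and (iii) $X'[c'] \geq 2$. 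Each check is $O(1)$ once $\repr(cP)$ is in hand, and the total number of (parent, child) pairs visited equals $\sum_{cP \in \Weiner(P)} |\children(cP)| = h$ by the third property of $Q_P$, so this post-processing contributes only $O(h)$ time. Finally I would reset $X'[c'] := 0$ for each $Pc' \in \children(P)$ so that $X'$ is restored to an empty array of size $\sigma$ for subsequent calls.

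The main obstacle I anticipate is not correctness of the three-condition characterization---which is immediate from the definition of a minimal unique substring once one reads condition (iii) as $\rightarr_P[c'] \geq 2$---but rather keeping the working space inside $O((\sigma + h')w)$ bits. The two ingredients that make this tight are (a) Lemma~\ref{lem:weiner_algorithm} already guarantees that explicit representations contribute only $O(h' w)$ bits in total while implicit ones are $O(w)$ each and are processed and discarded in turn, and (b) our extra scratch array $X'$ is absorbed into the existing additive $O(\sigma w)$ term. Verifying that the explicit/implicit split returned by Lemma~\ref{lem:weiner_algorithm} can be streamed rather than materialized all at once is the delicate point; once this is observed, the time bound $O(h \log\log_w(n/r))$ and the space bound $O((\sigma + h')w)$ follow with no further machinery.
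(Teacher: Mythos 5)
Your proof is correct and follows essentially the same route as the paper: convert $X'$ into $\rightarr_P$ from the child intervals stored in $\repr(P)$, apply Lemma~\ref{lem:weiner_algorithm} to obtain $\{\repr(cP)\}$, test the minimal-unique-substring conditions on each child $cPc'$ using $|\stinterval(cPc')|$ and $\rightarr_P[c']$, and restore $X'$ afterwards. The only cosmetic differences are the order of the first two steps and your explicit inclusion of the check $|\stinterval(cP)|\geq 2$, which the paper's write-up leaves implicit.
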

\begin{proof}
$\rightarr_{P}[c]$ is stored in $\repr(P)$ for $c \in \Sigma$,  
and the pair $(\stinterval(cPc'), |cPc'|)$ is stored in $\repr(cP)$. 
We compute the output by four steps.  
(i) We convert $X'$ into $\rightarr_{P}$ by processing $\repr(P)$. 
(ii) We compute the rich representations for all the strings in $\Weiner(P)$ by Lemma~\ref{lem:weiner_algorithm}.  
(iii) We process the rich representation for each string $cP \in \Weiner(P)$ 
and output pair $(\stinterval(cPc'), |cPc'|)$ for each child $cPc'$ if $|\stinterval(cPc')| = 1$ and $\rightarr_{P}[c'] \geq 2$. 
(iv) We recover $X'$ from $\rightarr_{P}$. 
Hence, Lemma~\ref{lem:mus} holds. 
\end{proof}

Node $u_{P}$ is always explicit~(i.e., $u_{P}$ is a node of the Weiner-link tree) if a substring $cPc'$ is a minimal unique substring of $T$ 
because $|\Occ(T, P)| > |\Occ(T, Pc')|$ holds from the definition of the minimal unique substring. 
Hence, we can compute $(\stinterval(cPc'), |cPc'|)$ for each minimal unique substring $cPc'$ 
in $T$ by applying Lemma~\ref{lem:mus} to all the nodes in the Weiner-link tree. 

R-enum prepares an empty array $X'$ of size $\sigma$ and computes the output for set $\mathcal{M}(P)$ by applying Lemma~\ref{lem:mus} to each node $u_{P}$. 
The running time and working space are $O((\sum_{t = 0}^{n} H_{t}) \log \log_{w} (n/r) ) = O(n \log \log_{w} (n/r))$ and 
$O( (r + \sigma + \max \{ |\mathcal{K}_{0}|, |\mathcal{K}_{1}|, \ldots, |\mathcal{K}_{n}| \} )w )) = O(rw)$ bits, respectively. 
Here, $H_{t}$ and $\mathcal{K}_{t}$ are the terms introduced in Section~\ref{sec:analysis}. 

\section*{Appendix C: Proof of Theorem~\ref{theo:main_result}(iii)}
For simplicity, we focus on minimal absent words with a length of at least $2$. 
Minimal absent words have similar properties to the properties of the minimal unique substrings explained in the proof of Theorem~\ref{theo:main_result}(ii), 
i.e., the characteristic substrings have two properties: 
(i)
a string $cPc'$ is a minimal absent word for $T$ if and only if 
$cPc'$ satisfies three conditions: 
(1) $|\Occ(T, cPc')| = 0$, (2) $|\Occ(T, cP)| \geq 1$, and (3) $\rightarr_{P}[c'] \geq 1$ hold, and  
(ii) node $u_{P}$ is always explicit if a substring $cPc'$ is a minimal absent word for $T$ 
because $|\Occ(T, P)| > |\Occ(T, Pc')|$ holds from the definition of the minimal absent word. 
We obtain the following lemma by modifying the proof of Lemma~\ref{lem:mus}. 
\begin{lemma}\label{lem:maw}
Let $\mathcal{W}(P)$ be the set of minimal absent words such that the form of each minimal unique substring is $cPc'$, 
where $c, c'$ are characters, and $P$ is a given string. 
We can compute the output by r-enum for the set $\mathcal{W}(P)$~(i.e., $\{ (\stinterval(cP), |cP|, c') \mid cPc' \in \mathcal{W}(P) \}$) 
by using (i) $\repr(R)$, (ii) the data structures presented in Section~\ref{sec:data_structures}, 
and (iii) an empty array $X'$ of size $\sigma$. 
The running time and working space are $O(h \log \log_{w} (n/r) + |\mathcal{W}(P)| )$ and $O((\sigma + h')w)$ bits, respectively. 
\end{lemma}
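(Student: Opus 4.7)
The plan is to mirror the structure of the proof of Lemma~\ref{lem:mus}, replacing the verification step with the characterization of minimal absent words stated just above. Concretely, $cPc' \in \mathcal{W}(P)$ iff $cP \in \Weiner(P)$, $\rightarr_P[c'] \geq 1$ (equivalently $Pc' \in \children(P)$), and $cPc' \notin \substr(T)$ (equivalently $c'$ is not a child-label of $u_{cP}$); as noted in the excerpt, these conditions force $u_P$ to be explicit, so $\repr(P)$ is available as assumed.

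I would proceed in four steps, reusing the array $X$ from Section~\ref{sec:data_structures} alongside the supplied empty array $X'$. First, convert $X'$ into $\rightarr_P$ by scanning $\repr(P)$ and writing $e-b+1$ at position $c'$ for each child tuple $(c', b, e)$; this costs $O(|\children(P)|)$. Second, invoke Lemma~\ref{lem:weiner_algorithm} on $\repr(P)$ to obtain the set $\{\repr(cP) \mid cP \in \Weiner(P)\}$; this costs $O(h \log\log_{w}(n/r))$ time and $O((\sigma + h')w)$ bits, and leaves $X$ empty again. Third, for each $cP \in \Weiner(P)$, set $X[c''] \leftarrow 1$ for every $c'' \in \children(cP)$ (read off from $\repr(cP)$), then walk the children list of $P$ inside $\repr(P)$ and, for each $c'$ in that list with $X[c'] = 0$, emit $(\stinterval(cP), |cP|, c')$; finally clear the entries of $X$ that were set. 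Fourth, zero out $\rightarr_P$ using $\repr(P)$ to restore $X'$.

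The principal analytical point — and the step I expect to require the most care — is showing that the third step runs in $O(h + |\mathcal{W}(P)|)$ time so that, combined with the Lemma~\ref{lem:weiner_algorithm} call, the total matches the stated bound. The marking and unmarking work across all $cP \in \Weiner(P)$ sums to $\sum_{cP \in \Weiner(P)} |\children(cP)|$, which equals $h$, since each Weiner link counted by $h = \sum_{Pc' \in \children(P)} |\Weiner(Pc')|$ appears exactly once as a child of some $u_{cP}$ with $cP \in \Weiner(P)$. The children-list traversals contribute $|\Weiner(P)| \cdot |\children(P)|$ work; by double counting, the pairs $(c, c')$ with $cP \in \Weiner(P)$ and $c' \in \children(P)$ split into those with $cPc' \in \substr(T)$ (contributing $h$) and those with $cPc' \notin \substr(T)$ (contributing exactly $|\mathcal{W}(P)|$, since the other two MAW conditions hold automatically on such pairs), yielding $|\Weiner(P)| \cdot |\children(P)| = h + |\mathcal{W}(P)|$. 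Adding all contributions gives $O(h\log\log_{w}(n/r) + h + |\mathcal{W}(P)|) = O(h\log\log_{w}(n/r) + |\mathcal{W}(P)|)$. The working space is inherited from Lemma~\ref{lem:weiner_algorithm} plus $O(\sigma w)$ bits for $X'$, giving the claimed $O((\sigma + h')w)$ bits in total.
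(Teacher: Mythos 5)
Your proof is correct and follows the same route the paper intends: the paper gives no explicit proof of this lemma, stating only that it follows "by modifying the proof of Lemma~\ref{lem:mus}", and your four-step adaptation is exactly that modification. In fact you supply the two details the paper leaves implicit --- how to detect the absent extensions $c'$ (marking $\children(cP)$ in the recycled array $X$ and scanning the children of $P$ for unmarked labels) and the double-counting argument $|\Weiner(P)|\cdot|\children(P)| = h + |\mathcal{W}(P)|$ that justifies the extra $+|\mathcal{W}(P)|$ term in the running time.
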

By using Lemma~\ref{lem:maw}, 
we can compute $(\stinterval(cP), |cP|, c')$ for each minimal absent word $cPc'$ for $T$ by applying Lemma~\ref{lem:maw} to all the nodes in the Weiner-link tree. 

R-enum prepares an empty array $X'$ of size $\sigma$ and computes the output for set $\mathcal{W}(P)$ by applying Lemma~\ref{lem:mus} to each node $u_{P}$. 
The running time and working space are $O(n \log \log_{w} (n/r) + occ)$ and $O(rw)$ bits, respectively.

\section*{Appendix D: Proof of Lemma~\ref{lem:additional_data_structures} }
\begin{proof}[Proof for data structure $R_{\mathsf{str}}$]
Data structure $R_{\mathsf{str}}$ consists of 
(i) the RLBWT of $T$, (ii) $\pi$, (iii) $D_{\LF}$, (iv) $R_{\rank}(\startset')$, and (v) $R_{\rank}(\startset)$.
Here, (i) $\pi$ is the permutation on $\{1, 2, \ldots, r \}$ satisfying $\LF(\ell(\pi[1])) < \LF(\ell(\pi[2])) < \cdots < \LF(\ell(\pi[r]))$, 
(ii) $D_{\LF}$ is the array introduced in Section~\ref{sec:traverse}, and
(iii) $\startset' = \{ \LF(\ell(\pi[1]))$ , $\LF(\ell(\pi[2]))$, $\ldots$, $\LF(\ell(\pi[r])) \}$. 
$\pi[i] = \delta[i]$ holds for any integer $i \in \{1, 2, \ldots, r \}$. 
$\startset'$ and $D_{\LF}$ can be constructed in $O(r)$ time after the permutation $\pi$ is constructed. 
We already showed that $\delta$ could be constructed in $O(n)$ time by processing the RLBWT of $T$, which was explained in Appendix A. 
Hence, we can construct $R_{\mathsf{str}}$ in $O(n \log \log_{w} (n/r))$ time and $O(rw)$ bits of working space.

We introduce the inverse function $\LF^{-1}$ of LF function~(i.e., $\LF^{-1}(\LF(i)) = i$ holds for $i \in \{ 1, 2, \ldots, n \}$) to solve the extract query.  
Recall that $\LF(i) = D_{\LF}[x] + (i - \ell(x))$ holds, which is shown in Section~\ref{sec:data_structures}, 
where $x = \rank(\startset, i)$. 
Similarly, $\LF^{-1}(i) = \ell(\pi[y]) + (i - D_{\LF}[\pi[y]])$ holds by the LF formula for any integer $i \in \{ 1, 2, \ldots, n \}$, 
where $y = \rank(\startset', i)$. 
Hence, we can compute $\LF^{-1}(i)$ in $O(\log \log_{w} (n/r))$ time by using the data structure $R_{\mathsf{str}}$. 

Let $\LF^{-1}_{x}$ be the function that returns the position obtained by recursively applying the inverse LF function to $i$ $x$ times~(i.e., 
$\LF^{-1}_{0}(i) = i$, and $\LF^{-1}_{x}(i) = \LF^{-1}(\LF^{-1}_{x-1}(i))$). 
Then, $T[\SA[i]..\SA[i] + d -1] = L[\LF^{-1}_{1}(i)], L[\LF^{-1}_{2}(i)], \ldots, L[\LF^{-1}_{d}(i)]$ holds for any integer $d \geq 1$ 
because $\SA[\LF^{-1}(i)] = \SA[i] + 1$ holds unless $\SA[i] = n$. 
$R_{\mathsf{str}}$ can support random access to the BWT $L$ in $O(\log \log_{w} (n/r))$ time using a rank query on set $\startset$. 
Hence, we can compute $T[\SA[i]..\SA[i] + d -1]$ in $O(d \log \log_{w} (n/r))$ time by using $R_{\mathsf{str}}$ for two given integers $i$ and $d$. 

We explain an algorithm solving an extract query for a given rich representation $\repr(P)$. 
Let $\stinterval(P) = [b, e]$. 
Then, $\SA[b]$ stores the index of a suffix of $T$ having $P$ as a prefix, i.e., 
$T[\SA[b]..\SA[b] + |P| -1] = P$ holds. 
We recover the prefix $P$ from $\SA[b]$ by using $R_{\mathsf{str}}$. 
Hence $R_{\mathsf{str}}$ supports the extract query in $O(|P| \log \log_{w} (n/r))$ time. 

\end{proof}
\begin{proof}[Proof for data structure $R_{\mathsf{occ}}$]
Next, we leverage a function $\phi$ to solve the extract-sa query. 
The function $\phi(\SA[i])$ returns $\SA[i+1]$ for $i \in \{ 1, 2, \ldots, n-1 \}$. 
$R_{\phi}$ is a data structure of $O(rw)$ bits proposed by Gagie et al.~\cite{10.1145/3375890}, 
and we can compute $\phi$ function in $O(\log \log_{w} (n/r))$ time by using $R_{\phi}$. 
The data structure can be constructed in $O(n \log \log_{w} (n/r))$ time and $O(rw)$ bits of working space by processing the RLBWT of $T$~\cite{10.1145/3375890}. 
The second data structure $R_{\mathsf{occ}}$ consists of $R_{\phi}$, 
and we solve the extract-sa query by recursively applying the function $\phi$ to $\SA[b]$ $(e-b)$ times. 
Hence $R_{\mathsf{occ}}$ can support the extract-sa query in $O((e-b+1) \log \log_{w} (n/r))$ time.
\end{proof}

\begin{proof}[Proof for data structure $R_{\eRD}$]
Let $D_{\SA}$ be an array of size $r$ such that 
$D_{\SA}[i]$ stores the sa-value at the starting position of the $i$-th run in BWT $L$ for $i \in \{1, 2, \ldots, r \}$, 
i.e., $D_{\SA}[i] = \SA[\ell(i)]$. 
Let $(c, p_{c}, q_{c}, \SA[p_{c}])$ be a 4-tuple outputted by query $\eRD(L, b, e, \SA[b])$ 
and $x$ be the index of the run containing character $L[p_{c}]$~(i.e., $x = \rank(\startset, p_{c})$). 
Then $\SA[p_{c}] = D_{\SA}[x]$ if $p_{c} = \ell(x)$; otherwise $\SA[p_{c}] = \SA[b]$ holds  
because $p_{c}$ is equal to $\ell(x)$ or $b$ under Lemma~\ref{lem:range_query}. 
The relationship among $\SA[p_{c}]$, $D_{\SA}[x]$, and $\SA[b]$ is called the \emph{toehold lemma}~(e.g., \cite{DBLP:journals/algorithmica/PolicritiP18,10.1145/3375890}). 
The toehold lemma indicates that 
we can compute $\SA[p_{c}]$ in $O(\log \log_{w} (n/r))$ time for each output $(c, p_{c}, q_{c}) \in \RD(L, b, e)$ by using 
(i) the array $D_{\SA}$, (ii) data structure $R_{\rank}(\startset)$, and (iii) the RLBWT of $T$ 
if we know the first sa-value $\SA[b]$ in $[b,e]$. 

Next, we explain $R_{\eRD}$. 
$R_{\eRD}$ consists of $R_{\RD}(L')$, $D_{\SA}$, $R_{\rank}(\startset)$, and the RLBWT of $T$. 
We already showed that we could construct $R_{\RD}(L')$ and $R_{\rank}(\startset)$ in $O(n \log \log_{w} (n/r))$ time by processing the RLBWT of $T$. 
We construct the array $D_{\SA}$ by computing all the sa-values in $\SA[1..n]$ in left-to-right order by using data structure $R_{\phi}$, 
and $R_{\phi}$ can be constructed in $O(n \log \log_{w} (n/r))$ time by processing the RLBWT of $T$. 
Hence, the construction time for $R_{\eRD}$ is $O(n \log \log_{w} (n/r))$ time in total, 
and the working space is $O(rw)$ bits. 

We solve extended range distinct query $\eRD(L, b, e, \SA[b])$ by using the toehold lemma after 
solving the corresponding range distinct query $\RD(L, b, e)$ by using $R_{\RD}(L')$, $R_{\rank}(\startset)$, and the RLBWT of $T$. 
The running time is $O((k+1) \log \log_{w} (n/r))$, 
where $k = |\eRD(L, b, e, \SA[b])|$. 
\end{proof}

\section*{Appendix E: Proof of Theorem~\ref{theo:main_result2}}
We extend Lemma~\ref{lem:weiner_algorithm}. 
Let $\erepr(P)$ for $P$ be a 4-tuple $(\stinterval(P)$, $\{ (c_{1}, b_{1}, e_{1}, \SA[b_{1}])$, 
$(c_{2}, b_{2}, e_{2}, \SA[b_{2}])$, $\ldots$, $(c_{k}, b_{k}, e_{k}, \SA[b_{k}] ) \}$, $|P|, \SA[b])$.  
Here, 
(i) $b$ is the left boundary of $\stinterval(P)$, 
(ii) $Pc_{1}, Pc_{2}, \ldots, Pc_{k}$ are strings represented by the children of node $u_{P}$, 
and (iii) $[b_{i}, e_{i}] = \stinterval(Pc_{i})$ for $i \in [1, k]$. 
We call $\erepr(P)$ an \emph{extended rich representation}. 

Let $\erepr(cP) = (\stinterval(cP)$, $\{ (c'_{1}, b'_{1}, e'_{1}, \SA[b'_{1}])$, $(c'_{2}, b'_{2}, e'_{2}, \SA[b'_{2}])$, $\ldots$, $(c'_{k'}, b'_{k'}, e'_{k'}, \SA[b'_{k'}] ) \}$, $|cP|, \SA[b'])$ for a character $c$. 
Let $x_{i}$ be an integer such that $\LF(x_{i}) = b'_{i}$ for a 4-tuple $(c'_{i}, b'_{i}, e'_{i}, \SA[b'_{i}])$ in $\erepr(P)$, 
and let $y(i)$ be an integer such that $x_{i} \in [b_{y(i)}, e_{y(i)}]$ holds. 
Then, there exists a tuple $(\hat{c}, p_{\hat{c}}, q_{\hat{c}}, \SA[p_{\hat{c}}]) \in \eRD(L, b_{y(i)}, e_{y(i)}, \SA[b_{y(i)}])$ such that 
$p_{\hat{c}} = x_{i}$ holds. 
$\SA[b'_{i}] = \SA[p_{\hat{c}}] - 1$ holds by LF function. 
Next, let $j$ be an integer such that $b'_{j}$ is the smallest in set $\{ b'_{1}, b'_{2}, \ldots, b'_{k'} \}$. 
Then $\SA[b'] = \SA[b'_{j}]$ holds because $b'_{j}$ is equal to the left boundary of $\stinterval(cP)$. 
Hence, Lemma~\ref{lem:weiner_algorithm} can output extended rich representations instead of rich representations 
by replacing the range distinct queries used by the algorithm of Lemma~\ref{lem:weiner_algorithm} with the corresponding extended range distinct queries. 
Formally, the following lemma holds. 

\begin{lemma}\label{lem:modified_weiner_algorithm}
We can compute set $\{ \erepr(cP) \mid cP \in \Weiner(P) \}$ for a given rich representation $\erepr(P)$ 
in $O(h \log \log_{w} (n/r))$ time by using $R_{\eRD}$ and the six data structures introduced in Section~\ref{sec:data_structures}, 
where $h = \sum_{Pc' \in \children(P)} |\Weiner(Pc')|$. 
\end{lemma}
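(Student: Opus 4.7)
My plan is to mirror the three-step algorithm in the proof of Lemma~\ref{lem:weiner_algorithm}, replacing every call to a range distinct query with an extended range distinct query on $R_{\eRD}$ and enlarging the auxiliary array $Q_P$ so that each entry stores a 4-tuple carrying the left-boundary sa-value. The arithmetic justifying the sa-value bookkeeping is essentially laid out in the paragraph preceding the lemma; the proof amounts to showing that the data structures in Section~\ref{sec:data_structures} together with $R_{\eRD}$ suffice to execute the augmented procedure within the claimed time budget.

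First, I would convert the empty array $X$ into a sa-value-enriched version of $Q_P$ by iterating over the children $Pc_y$ of $u_P$ recorded in $\erepr(P)$. For each such child I run $\eRD(L, b_y, e_y, \SA[b_y])$, obtaining tuples $(\hat{c}, p_{\hat{c}}, q_{\hat{c}}, \SA[p_{\hat{c}}])$. By Lemma~\ref{lem:weiner} each such tuple corresponds to a child $\hat{c}Pc_y$ of $u_{\hat{c}P}$ whose suffix-tree interval is $[\LF(p_{\hat{c}}), \LF(q_{\hat{c}})]$, and the LF formula immediately yields $\SA[\LF(p_{\hat{c}})] = \SA[p_{\hat{c}}] - 1$, which I record as the sa-value associated with the new left boundary. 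The LF evaluations take $O(\log \log_{w}(n/r))$ each using the structures of Section~\ref{sec:data_structures}.

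Second, I would assemble $\erepr(cP)$ for every $cP \in \Weiner(P)$ by scanning $Q_P[\hat{c}]$ for each $\hat{c}$: the collected 4-tuples form the children of $u_{cP}$, the depth is $|P|+1$, and $\stinterval(cP)$ is the union of the children intervals. Crucially, the left-boundary sa-value $\SA[b']$ of $\stinterval(cP)$ is exactly the sa-value stored with the child whose left boundary $b'_j$ is smallest, since $b'_j$ coincides with the left boundary of $\stinterval(cP)$; I identify that child during the same scan. After emitting each $\erepr(cP)$ I clear $Q_P$ back to the empty array, as in the proof of Lemma~\ref{lem:weiner_algorithm}.

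For the running time, Lemma~\ref{lem:additional_data_structures} gives $O((|\Weiner(Pc_y)|+1)\log\log_{w}(n/r))$ per extended range distinct query; summing over $Pc_y \in \children(P)$ telescopes into $O(h \log\log_{w}(n/r))$. The main obstacle I anticipate is pinning down that the sa-value recovered for the left boundary of each $\stinterval(cP)$ is actually correct in a single pass -- this is precisely the observation articulated just before the lemma, and once invoked the rest of the argument follows by directly reusing the correctness and complexity analysis of Lemma~\ref{lem:weiner_algorithm}.
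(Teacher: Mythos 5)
Your proposal is correct and follows essentially the same route as the paper: the paper likewise justifies the lemma by replacing each range distinct query in the algorithm of Lemma~\ref{lem:weiner_algorithm} with an extended range distinct query, propagating the sa-value of each new left boundary via $\SA[\LF(p_{\hat{c}})] = \SA[p_{\hat{c}}] - 1$, and taking $\SA[b']$ from the child with the smallest left boundary. The time accounting you give matches the paper's as well.
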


Next, we modify our traversal algorithm for the Weiner-link tree of $T$. 
The modified traversal algorithm uses Lemma~\ref{lem:modified_weiner_algorithm} instead of Lemma~\ref{lem:weiner_algorithm}. 
Hence, we obtain the following lemma. 
\begin{lemma}\label{lem:modified_traversal}
We can output $n$ sets $\{ \erepr(P) \mid P \in \mathcal{L}_{0} \}$, $\{ \erepr(P) \mid P \in \mathcal{L}_{1} \}$, $\ldots$, 
$\{ \erepr(P) \mid P \in \mathcal{L}_{n-1} \}$ in left-to-right order 
in $O(n \log \log_{w} (n/r))$ time and $O(rw)$ bits of working space by processing the RLBWT of $T$. 
\end{lemma}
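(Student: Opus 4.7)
The plan is to mirror the proof of Theorem~\ref{theo:enumeration_lcp_intervals} almost verbatim, replacing each invocation of Lemma~\ref{lem:weiner_algorithm} by Lemma~\ref{lem:modified_weiner_algorithm} and augmenting the collection of data structures from Section~\ref{sec:data_structures} with the data structure $R_{\eRD}$ of Lemma~\ref{lem:additional_data_structures}. First I would describe the modified breadth-first algorithm: start from $\erepr(\varepsilon)$ (which is just $\repr(\varepsilon)$ paired with $\SA[1]$, computable at setup time), and at each $t$-th step apply Lemma~\ref{lem:modified_weiner_algorithm} to every element of $\{ \erepr(P) \mid P \in \mathcal{L}_{t-1} \}$, output the union $\{ \erepr(P) \mid P \in \mathcal{L}_{t} \}$, then discard the previous level. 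Correctness of the level-by-level decomposition is identical to the proof of Theorem~\ref{theo:enumeration_lcp_intervals} because the underlying Weiner-link tree traversal is unchanged; only the payload carried at each node has been enriched.

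Next I would handle the running time. The total work summed over all steps is $O(\sum_{t \geq 0} H_{t} \log \log_{w}(n/r)) = O(n \log \log_{w}(n/r))$ by Lemma~\ref{lem:modified_weiner_algorithm} and the bound $\sum_{t} H_{t} = O(n)$ from \cite[Observation 1]{DBLP:journals/talg/BelazzouguiCKM20} that was already invoked in Section~\ref{sec:analysis}. The construction cost of $R_{\eRD}$ is $O(n \log \log_{w}(n/r))$ by Lemma~\ref{lem:additional_data_structures}, so it is absorbed.

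For the working space I would argue that the extended rich representation $\erepr(P)$ has size only a constant factor larger than $\repr(P)$: each child contributes one extra $\SA$-value and the node itself carries one additional $\SA[b]$, so the level-wise inequality $|\mathcal{K}_{t}| \leq 2r$ derived via $\mathcal{K}'_{t}$, $\mathcal{P}_{t}$, and Lemma~\ref{lem:same_lcp_values_count} in Section~\ref{sec:analysis} still bounds the payload at each step by $O(rw)$ bits. Adding the $O(rw)$-bit data structures from Section~\ref{sec:data_structures} together with the $O(rw)$-bit $R_{\eRD}$ (Lemma~\ref{lem:additional_data_structures}) keeps the total within $O(rw)$ bits.

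The main obstacle is purely bookkeeping: verifying that the extra $\SA$-values attached to each child and to the node itself can be produced at the claimed cost by Lemma~\ref{lem:modified_weiner_algorithm}, i.e., that passing $\SA[b]$ as a fourth coordinate into the range-distinct machinery is enough to recover both $\SA[b_i']$ for each child interval (via the toehold-style LF propagation $\SA[b_i'] = \SA[p_{\hat c}] - 1$) and $\SA[b']$ for the parent (as the minimum among the child $\SA$-values, since the leftmost child interval touches the left boundary of $\stinterval(cP)$). Once this is granted by the preceding lemma, the whole proof reduces to citing it and re-running the asymptotic accounting of Section~\ref{sec:analysis} unchanged.
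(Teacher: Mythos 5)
Your proposal is correct and follows essentially the same route as the paper: the paper likewise obtains this lemma by running the breadth-first traversal of Theorem~\ref{theo:enumeration_lcp_intervals} verbatim with Lemma~\ref{lem:modified_weiner_algorithm} substituted for Lemma~\ref{lem:weiner_algorithm}, relying on the unchanged bounds $\sum_{t} H_{t} = O(n)$ and $|\mathcal{K}_{t}| \leq 2r$ since $\erepr(P)$ is only a constant factor larger than $\repr(P)$. Your additional remarks on initializing $\erepr(\varepsilon)$ and on the toehold-style propagation of sa-values are exactly the bookkeeping the paper delegates to the discussion preceding Lemma~\ref{lem:modified_weiner_algorithm}.
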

Finally, we prove Theorem~\ref{theo:main_result2} by using the modified traversal algorithm, i.e., Lemma~\ref{lem:modified_traversal}.
\begin{proof}[Proof for maximal repeats]
The node $u_{P}$ representing a maximal repeat $P$ is explicit, 
and hence, $\erepr(P)$ is outputted by the modified traversal algorithm. 
R-enum uses the modified traversal algorithm instead of our original traversal algorithm. 
Hence, r-enum can output the extended rich representations for all the maximal repeats in $T$ without increasing the running time. 
\end{proof}
\begin{proof}[Proof for minimal unique substrings]
Let $cPc'$ be a minimal unique substring of $T$ such that its occurrence position is $\SA[b']$. 
Recall that r-enum computes $\repr(cP)$ by applying Lemma~\ref{lem:weiner_algorithm} to $\repr(P)$ 
and outputs $(\stinterval(cPc'), |cPc'|)$ by processing $\repr(cP)$. 
The extended rich representation $\erepr(cP)$, which corresponds to $\repr(cP)$, 
contains $\stinterval(cPc'), |cPc'|$, and $\SA[b']$. 
The modified r-enum (i) traverses the Weiner-link tree by using the modified traversal algorithm, 
(ii) computes $\erepr(cP)$ by applying Lemma~\ref{lem:modified_weiner_algorithm} to $\erepr(P)$, 
and (iii) outputs $(\stinterval(cPc'), |cPc'|, \SA[b'])$ by processing $\erepr(cP)$. 
The running time is $O(n \log \log_{w} (n/r))$ time in total. 
\end{proof}

\begin{proof}[Proof for minimal absent words]
Let $cPc'$ be a minimal absent word for $T$, 
and let $b''$ be the left boundary of $\stinterval(cP)$. 
$\erepr(cP)$ contains the sa-value $\SA[b'']$, 
and hence, we can compute $(\stinterval(cP), |cP|, c', \SA[b'])$ for the minimal absent word $cPc'$ 
by modifying the algorithm used by the modified r-enum for minimal unique substrings. 

The modified r-enum (i) traverses the Weiner-link tree by using the modified traversal algorithm, 
(ii) computes $\erepr(cP)$ by applying Lemma~\ref{lem:modified_weiner_algorithm} to $\erepr(P)$, 
and (iii) outputs $(\stinterval(cP), |cP|, c', \SA[b'])$ by processing $\erepr(cP)$. 
The running time is $O(n \log \log_{w} (n/r) + occ)$ time in total. 
\end{proof}

\section*{Appendix F: Omitted table}

\begin{table}[ht]
    \footnotesize
    \caption{Execution time of each method. Execution time is separately presented as enumeration and preprocessing times in seconds (s).}
    \label{table:time2} 
    \center{	
\begin{tabular}[t]{|l|r|r|r|r||r|r|r|r|}
\hline
& \multicolumn{4}{|c||}{Preprocessing time [s]} & \multicolumn{4}{|c|}{Enumeration time [s]}  \\ \hline
Data name & r-enum & BBO & BC & OT & r-enum & BBO & BC & OT \\ \hline\hline
einstein.de.txt & 1  & 1  & 1  & 7  & 171  & 83  & 69  & {\bf 6}  \\ \hline
einstein.en.txt & 3  & 3  & 3  & 50  & 853  & 484  & 384  & {\bf 30}  \\ \hline
world leaders & 1  & 1  & 1  & 7  & 96  & 23  & 15  & {\bf 3}  \\ \hline
influenza & 2  & 1  & 1  & 17  & 265  & 68  & 48  & {\bf 13}  \\ \hline
kernel & 2  & 2  & 2  & 41  & 557  & 279  & 176  & {\bf 16}  \\ \hline
cere & 5  & 3  & 3  & 78  & 980  & 274  & 183  & {\bf 32}  \\ \hline
coreutils & 2  & 2  & 2  & 29  & 443  & 283  & 177  & {\bf 13}  \\ \hline
Escherichia Coli & 5  & 1  & 1  & 18  & 248  & 67  & 39  & {\bf 11}  \\ \hline
para & 6  & 3  & 3  & 71  & 955  & 259  & 170  & {\bf 31}  \\ \hline
\end{tabular}
}
\end{table}

\end{document}